\definecolor{linkblue}{named}{black}
\newtheorem{thm}{Theorem}
\newtheorem{obs}{Observation}
\newtheorem{lem}{Lemma}
\newtheorem{cor}{Corollary}
\newtheorem{conj}{Conjecture}
\newtheorem{prb}{Problem}
\newtheorem{rem}{Remark}
\newtheorem{prop}{Proposition}
\newtheorem*{clm}{Claim}
\newcommand{\defin}[1]{\emph{\textcolor{black}{#1}}}
\author[Biniaz et. al]{
    \centering
   Ahmad Biniaz\affiliationmark{1}
  \and Prosenjit Bose\affiliationmark{2}
  \and Jean-Lou De Carufel\affiliationmark{3}
  \and Anil Maheshwari\affiliationmark{2}
  \and Babak Miraftab\affiliationmark{2}
  \and Saeed Odak\affiliationmark{3}
  \and Michiel Smid\affiliationmark{2}
  \and Shakhar Smorodinsky\affiliationmark{4}
  \and Yelena Yuditsky\affiliationmark{5}
}
\title[On Separating Path and Tree Systems in Graphs]{On Separating Path and Tree Systems in Graphs\thanks{Ahmad Biniaz, Prosenjit Bose, Jean-Lou De Carufel, Anil Maheshwari,
Babak Miraftab, Saeed Odak, and Michiel Smid were in part supported by NSERC. Shakhar Smorodinsky was partially supported by the Israel Science Foundation (grant no.~1065/20) and by the United States -- Israel Binational Science Foundation (NSF-BSF grant no.~2022792). Yelena Yuditsky  was supported by the Belgian National Fund for Scientific Research (FNRS).}}
\affiliation{
  School of Computer Science, University of Windsor, Windsor, Canada\\
  School of Computer Science, Carleton University, Ottawa, Canada\\
  School of Electrical Engineering and Computer Science, University of Ottawa, Ottawa, Canada\\
  Department of Computer-Science, Ben-Gurion University of the Negev, Beersheba, Israel\\
  D{\'e}partement d’Informatique, Universit{\'e} libre de Bruxelles, Brussels, Belgium
}
\keywords{separating set system,
vertex-separating path system,
vertex-separating tree system}
\begin{document}
\publicationdata{vol. 27:2}{2025}{12}{10.46298/dmtcs.12743}{2023-12-25; 2023-12-25; 2025-02-03}{2025-04-24}

\maketitle

\begin{abstract}
We explore the concept of separating systems of vertex sets of graphs. A separating system of a set $X$ is a collection of subsets of $X$ such that for any pair of distinct elements in $X$, there exists a set in the separating system that contains exactly one of the two elements. A separating system of the vertex set of a graph $G$ is called a vertex-separating path (tree) system of $G$ if the elements of the separating system are paths (trees) in the graph $G$. In this paper, we focus on the size of the smallest vertex-separating path (tree) system for different types of graphs, including trees, grids, and maximal outerplanar graphs.
\end{abstract}

\section{Introduction}

Given a set $X$, a collection $\mathcal{F}$ of subsets of $X$ is called a \defin{weakly separating system} or, simply, a \defin{separating system} of $X$ if for every pair of distinct elements, $a, b \in X$, there exists a set $F \in \mathcal{F}$ such that $F$ separates $a$ and $b$, that is, $F$ contains exactly one of $a$ and $b$. We refer to the smallest size of a separating system as the \defin{separation number} of $X$. 

R\'{e}nyi \cite{renyi1961random} initiated this notion of separation in 1961. In fact, the separation number of a set of size $n$ is $\lceil \log{n} \rceil$. 
By restricting the set $X$ and enforcing some conditions on the elements of $\mathcal{F}$, several interesting variants of separating set system problems have been studied in the literature \cite{DBLP:journals/corr/abs-2301-08707, DBLP:journals/dam/BaloghCMP16, wickes2022separating, DBLP:journals/combinatorics/Rosendahl03, DBLP:journals/dm/Cai84, DBLP:journals/ejc/BollobasS07, DBLP:journals/dm/Wegener79, DBLP:journals/dam/WienerHT20, DBLP:journals/jct/KundgenMT01}. For example, when the set $X$ represents the edge set of a given $n$-vertex connected graph $G$, an \defin{edge-separating path system} of $G$ is defined as a collection of paths in $G$ that separates the edges of $G$, that is, the elements of the separating system of $X$ are paths in $G$. Falgas-Ravry et al. \cite{falgas2013separating} conjectured that the smallest size of an edge-separating path system of $G$ is independent of the size of the edge set and it is linear in the size of the vertex set, i.e. $O(n)$; this has been proven recently in \cite{DBLP:journals/corr/abs-2301-08707}. Also, geometric versions of this problem, when the set $X$ is an arbitrary point set in the plane and the separating sets are geometric objects, like circles and convex sets, have been studied in \cite{DBLP:journals/comgeo/GerbnerT13} and \cite{DBLP:journals/dm/GledelP19}.

In this paper, we focus on separating the vertex set of a graph with paths and trees. Let $G$ be a graph and $X \subseteq V(G)$ be a set of vertices of $G$. A collection of distinct paths $\mathcal{S} := \{\Pi_1, \Pi_2, \Pi_3, \ldots, \Pi_k\}$ in $G$ is called a \defin{separating path system} of $X$ if for every pair of distinct vertices, $u, v \in X$, there exists an $\ell \in \{1, 2, 3, \ldots, k\}$ such that $\Pi_\ell$ contains exactly one of $u$ and $v$. We define a \defin{vertex-separating path system} of the graph $G$ to be a separating path system of $V(G)$. We are interested in the size of the smallest vertex-separating path system of the graph $G$, and we denote this number by $f(G)$. Analogously, we define a \defin{vertex-separating tree system} of $G$, when the elements of the separating set $\mathcal{S}$ are tree subgraphs of $G$, and we denote the size of the smallest vertex-separating tree system of $G$ by $f_t(G)$.

To the best of our knowledge, there exist only a few results in the literature introducing and studying vertex-separating path systems. Foucaud and Kov{\v{s}}e \cite{DBLP:journals/jda/FoucaudK13} studied this parameter in the context of identifying codes, and they provide optimal vertex-separating path systems for path and cycle graphs. They also present the first upper and lower bounds for $f(T)$ when $T$ is a tree. Recently, Arrepol et al. \cite{arrepol2023separating}, among other variants, studied vertex-separating path systems of random graphs, and they also improved the upper and lower bounds of $f(T)$ when $T$ is a tree. In \cref{trees}, we briefly review the known bounds for trees. We then present a tight lower bound for the size of the smallest vertex-separating path system. In fact, we show that $f(T) \ge \frac{n}{4}$ for every $n$-vertex tree.

\begin{table}[t]
  \renewcommand{\arraystretch}{1.5}
  \centering{
  \resizebox{\textwidth}{!}{
    \begin{tabular}{p{6cm}p{0.45cm}ccl} \hline
    Graph class & & Lower Bound & Upper Bound & Ref. \\ \hline
    $n$-vertex complete graph, $K_n$ & & $\lceil \log{n} \rceil$ & $\lceil \log{n} \rceil$ & \cite{renyi1961random} \\
    $d$-dimensional hypercube, $Q_d$ & & $d$ & $d$ & \cite{DBLP:journals/jda/FoucaudK13} \\
    $n$-vertex path and cycle, $P_n$, $C_n$ & & $\lceil \frac{n}{2} \rceil$ & $\lceil \frac{n}{2} \rceil$ & \cite{DBLP:journals/jda/FoucaudK13} \\
    Complete bipartite graph, $K_{m,n}$ & & $\Omega(\frac{n}{m}\cdot \frac{\log{n}}{\log{(1+n/m)}})$ & $O(\frac{n}{m}\cdot \frac{\log{n}}{\log{(1+n/m)}})$ & \cite{katona1966separating}, \cref{kmn} \\
    $m \times n$- grid graph, $G_{m,n}$ $(m, n \ge 2)$ & & $\lceil \log{m} + \log{n} \rceil$ & $2 \lceil \log{m} \rceil + 2 \lceil \log{n} \rceil$ & \cref{grid} \\
    Erd\H{o}s–R\'{e}nyi random graph $G(n, p)$, $p \ge (2\ln{n}+\omega(\ln{\ln{n}}))/n$ & & $\lceil \log{n} \rceil$ & w.h.p. $\lceil \log{n} \rceil+1$ & \cite{arrepol2023separating}\\
    Erd\H{o}s–R\'{e}nyi random graph $G(n, p)$, $p \le (\ln{n}-\omega(\ln{\ln{n}}))/n$ & & w.h.p. $\omega(\log{n})$ & $O(n)$ & \cite{arrepol2023separating}\\
    $n$-vertex tree & & $\frac{n}{4}$ & $\frac{2n}{3} + O(1)$ & \cite{DBLP:journals/jda/FoucaudK13}, \cref{trees_label} \\
    $n$-vertex maximal outerplanar graph & & $\Omega(\log{n})$ & $\frac{n}{4} + O(1)$ & \cref{outerthm} \\
    $n$-vertex $K_{2,t}$-minor-free graph with a vertex of degree $\Omega(n^\delta)$ & &  $\Omega(n^{\delta/{(t+1)}})$ & $\frac{2n}{3} + O(1)$ & \cref{poly-lower} \\
    \hline
    \end{tabular}
  } 
  }
  \caption{Summary of previous and new results on the value of $f(G)$ (w.h.p. stands for with high probability).}
  \label{summary-table}
\end{table}

In \cref{grid_sec}, we focus on separating the vertices of an $n$ by $n$ grid using $O(\log{n})$ paths, where $n \ge 2$. In a related study, Honkala et al. \cite{DBLP:journals/dam/HonkalaKL03} use cycles to separate the vertices of a torus. Notably, Rosendahl \cite{DBLP:journals/combinatorics/Rosendahl03} also studies the same parameter as discussed in \cite{DBLP:journals/dam/HonkalaKL03} but for higher dimensions. Additionally, Rosendahl \cite{DBLP:journals/combinatorics/Rosendahl03} analyze the precise values of $f(K_{n,n})$ and $f(K_{3,n})$, where $n \in \mathbb{N}$. In \cref{complete}, by utilizing an old result by Katona \cite{katona1966separating}, we present a tight asymptotic bound for the value of $f(K_{m,n})$, where $m$ and $n$ are arbitrary positive integers.

In addition to the aforementioned results, this paper also includes a tight upper bound for the value of $f(G)$ when $G$ is in the class of maximal outerplanar graphs. Moreover, we show that a $K_{2,t}$-minor-free graph with a high-degree vertex requires a polynomial-sized vertex-separating path system for any constant $t$. Next, we focus on the size of optimal vertex-separating tree systems and we prove that every $n$-vertex graph with radius $r$ has a vertex-separating tree system of size at most $r + 2\log{n} + 1$.

The rest of this paper is organized as follows: In section 2, we start with preliminaries and some simple observations related to vertex-separating path systems. In \cref{complete}, \cref{trees}, \cref{grid_sec}, and \cref{outer}, we will discuss vertex-separating path systems of complete bipartite graphs, trees, grids, and maximal outerplanar graphs, respectively (see \cref{summary-table}). \cref{poly-lowerbound} will establish a sufficient condition that guarantees a polynomial-sized lower bound for the size of vertex-separating path systems in certain classes of graphs. \cref{tree_sep} studies vertex-separating tree systems and compares this variant with vertex-separating path systems. Finally, we discuss some open problems in \cref{final}.

\section{Preliminaries} \label{preliminaries}

For integers $0 \le a \le b$, put $[a] := \{x | x \in \mathbb{Z}$ and $1 \le x \le a \}$ and $[a, b] := \{x | x \in \mathbb{Z}$ and $a \le x \le b\}$. Throughout this paper, we use standard graph theoretic terminology as used in the textbook by Diestel \cite{diestel:graph}. All graphs discussed here are connected, simple, finite, and have at least 4 vertices. We denote the vertex set and edge set of a graph $G$ by $V(G)$ and $E(G)$, respectively. We say that a subgraph $G'$ of a graph $G$ \defin{spans} a set $S \subseteq V(G)$ if $S \subseteq V(G')$. 

A \defin{path} in $G$ is a sequence of distinct vertices $v_0,v_1,\ldots,v_r$ with the property that $\{v_{i-1},v_i\}\in E(G)$, for each $i \in [r]$.  The \defin{endpoints} of such a path are the vertices $v_0$ and $v_r$. The \defin{length} of a path is the number of edges in the path. A path of length zero is called a \defin{trivial path}. If $v$ is the only vertex in a trivial path $\Pi$, we say $v$ \defin{creates} $\Pi$. A vertex $v \in V(G)$ is called a \defin{center} of the graph $G$ if the largest distance of $v$ to other vertices in $V(G)\setminus\{v\}$ is minimal (a center vertex may not be unique).

An \defin{$m\times n$ grid} $G_{m, n}$ is a graph with vertex set $V(G_{m, n}):=\{0,1,2,\ldots, m-1\}\times\{0,1,2, \ldots, n-1\}$ and edge set $E(G_{m,n}) = \{(i, j)(i', j') | 0 \le i, i' < m, 0 \le j, j' < n$ and $|i - i'| + |j - j'| = 1\}$. For a vertex $(i, j) \in V(G_{m,n})$, we define the projection functions as $\pi_x(i, j) = i$ and $\pi_y(i, j) = j$. Each \defin{column} of $G_{m,n}$ consists of vertices with the same $\pi_y$ value, that is the vertex set $\{0,1,2\ldots,m-1\}\times\{j\}$ for some fixed $j \in \{0,1,2, \ldots, n-1\}$. Similarly, each \defin{row} of $G_{m,n}$ consists of vertices with the same $\pi_x$ value, that is the vertex set $\{i\}\times \{0,1,2,\ldots,n-1\}$ for some fixed $i \in \{0,1,2, \ldots, m-1\}$. A set $C$ of columns (rows) is consecutive if $G_{m,n}[\cup C]$ is connected.

Let $G$ be a graph and let $X$ be a non-empty subset of $V(G)$. A \defin{labeling} of $X$ is a function $\psi : S \rightarrow [0, 2^{\lceil \log{|X|} \rceil}-1]$. This labeling is called \defin{nice} if for each $1 \le i \le \lceil \log{|X|} \rceil$, the graph induced by $(V(G)\setminus X)\cup X_i$ contains a path that spans the set $X_i := \{v \in S$ $|$ the $i$-th bit in the binary representation of $\psi(v)$ is $1\}$.

\begin{thm}[\cite{renyi1961random}] \label{kn}
    Let $G$ be a graph and let $X$ be a non-empty subset of $V(G)$. Then $X$ has a separating path system of size $\lceil \log{|X|} \rceil$ if and only if $X$ has a nice labeling.
\end{thm}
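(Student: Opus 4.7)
The plan is to set up a direct dictionary between vertex-separating path systems of $S$ of size $k := \lceil \log|S| \rceil$ and nice labelings of $S$, in which the $i$-th bit of a label simply records membership in the $i$-th path. Both directions of the equivalence then reduce to reading off the definitions, and the only mildly subtle point is that the labeling must be (and is in fact forced to be) injective.

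For the ``if'' direction, suppose $\psi$ is a nice labeling. For each $i \in [k]$ I invoke the definition to obtain a path $\Pi_i$ in the subgraph induced by $(V(G)\setminus S)\cup S_i$ that spans $S_i$. By construction $V(\Pi_i)\cap S = S_i$, so $\Pi_i$ contains every vertex of $S_i$ and no vertex of $S\setminus S_i$. Two distinct vertices $u,v\in S$ must receive different labels (otherwise they share every $S_i$ and cannot be separated), so $\psi(u)$ and $\psi(v)$ differ in some bit $i$, and then $\Pi_i$ contains exactly one of $u,v$. Hence $\{\Pi_1,\dots,\Pi_k\}$ is a vertex-separating path system of $S$ of size $k$.

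For the converse, start from a vertex-separating path system $\{\Pi_1,\dots,\Pi_k\}$ and define $\psi(v)$ by setting its $i$-th bit to $1$ iff $v \in V(\Pi_i)$. Since some $\Pi_i$ separates any two distinct vertices of $S$, the resulting labels are pairwise distinct, so $\psi$ is a well-defined injective labeling $S \to [0,2^k-1]$. With this choice, $S_i = S\cap V(\Pi_i)$, and every vertex of $\Pi_i$ is either in $S_i$ or in $V(G)\setminus S$, so $\Pi_i$ itself is a path in the subgraph induced by $(V(G)\setminus S)\cup S_i$ that spans $S_i$. Thus $\psi$ is nice.

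I do not expect any significant obstacle: the entire argument is a bit-by-bit translation between the two objects, and the role of injectivity is the only place where one needs to be a little careful in the bookkeeping.
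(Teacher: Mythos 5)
The paper does not actually prove this statement; it is quoted from R\'enyi's work and used as a black box, so there is no in-paper argument to compare against. Your bit-encoding dictionary is the standard and correct way to establish the equivalence, and the ``only if'' direction (reading the $i$-th bit off membership in $\Pi_i$, then observing $V(\Pi_i)\subseteq (V(G)\setminus S)\cup S_i$ and $S_i\subseteq V(\Pi_i)$) is airtight. The one point to tighten is the injectivity step in the ``if'' direction. As literally written, the paper's definition of a labeling is just a function $\psi : S \rightarrow [0, 2^{\lceil \log |S|\rceil}-1]$, with no injectivity requirement, and your parenthetical justification --- ``otherwise they share every $S_i$ and cannot be separated'' --- is circular in that direction: separation is the conclusion you are trying to reach, so you cannot use it to force the labels apart. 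If $\psi$ were constant, every $S_i$ would be empty, the niceness condition would hold vacuously, and the construction would separate nothing. The fix is simply to take injectivity as part of what ``labeling'' means (which is clearly the intent, since the codomain has size $2^{\lceil \log |S|\rceil}\ge |S|$), state that assumption explicitly, and then your argument goes through verbatim: distinct labels differ in some bit $i$, and $\Pi_i$ contains exactly one of the two vertices because $V(\Pi_i)\cap S = S_i$.
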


Let $n$ and $k$ be positive integers. A separating path system of size $k$ can separate at most $2^k$ vertices from each other, therefore for any graph $G$ with $n$ vertices, $f(G) \ge \lceil \log{n} \rceil$ \cite{renyi1961random}. Since every induced subgraph of $K_n$ has a spanning path, by taking $X = V(G)$ in \cref{kn}, we have $f(K_n) = \lceil \log{n} \rceil$ (Note that any labeling of $V(G)$ is nice). If we are aiming to cover $V(G)$ using the same set of paths in the separating system, we need at least $\lceil \log{(n+1)} \rceil$ paths.\footnote{This is equivalent to not using the label $0$ for any vertex in $V(G)$ (cf. \cite[Proposition 2]{DBLP:journals/jda/FoucaudK13}).} By \cref{kn}, we have that $f(Q_k) = k$, where $Q_k$ is the $k$-dimensional hypercube \cite{DBLP:journals/jda/FoucaudK13}. As we will be referring to them in the sequel, we state the value of the parameter $f$ for paths and cycles.

\begin{obs} (\cite[Theorem 15]{DBLP:journals/jda/FoucaudK13}) \label{cycle}
For every integer $n \ge 3$, $f(P_n) = f(C_n) = \lceil \frac{n}{2} \rceil$, where $P_n$ and $C_n$ denote an $n$-vertex path and cycle, respectively.
\end{obs}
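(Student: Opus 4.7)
My plan is to analyze both bounds by exploiting the fact that every path in $P_n$ (with vertices $v_1,\ldots,v_n$) is an interval $[v_a,v_b]$, and every path in $C_n$ is an arc of consecutive vertices around the cycle. Given a separating system $\{\Pi_1,\ldots,\Pi_k\}$, I attach to each vertex $v_t$ its \emph{profile} $A_t := \{\ell : v_t \in \Pi_\ell\}$; the system is separating exactly when the $n$ profiles are pairwise distinct.

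For the upper bound on $P_n$, I would give an explicit construction using $\lceil n/2 \rceil$ overlapping short intervals. Concretely, take $\Pi_1 = \{v_1,v_2\}$, then $\Pi_i = \{v_{2i-2},v_{2i-1},v_{2i}\}$ for $2 \le i \le \lceil n/2 \rceil - 1$, and finish with either $\{v_{n-2},v_{n-1}\}$ or $\{v_{n-1},v_n\}$ according to the parity of $n$. A direct check shows that moving from $v_t$ to $v_{t+1}$ flips exactly one coordinate of the profile, so the $n$ profiles are all distinct. Because $P_n$ embeds as a subgraph of $C_n$, the same family also witnesses $f(C_n) \le \lceil n/2 \rceil$.

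For the lower bound I use an endpoint-counting argument. The profile $A_t$ can change between $v_t$ and $v_{t+1}$ only when some $\Pi_\ell$ has a right endpoint at $v_t$ or a left endpoint at $v_{t+1}$, and each path contributes at most two such ``events''. Since every one of the $n-1$ consecutive transitions of $P_n$ (respectively the $n$ cyclic transitions of $C_n$) must flip some coordinate, the cycle bookkeeping immediately gives $2k \ge n$, hence $f(C_n) \ge \lceil n/2 \rceil$; for $P_n$ the same count gives only $2k \ge n-1$. This matches the target for even $n$ but leaves a gap of $1$ for odd $n$.

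The only delicate step is closing this parity gap. Suppose $n = 2k+1$ could be separated by only $k$ paths in $P_n$. Then the $2k$ available events would need to cover all $n-1 = 2k$ transitions, forcing each path to contribute exactly two events; this in turn forces $a_\ell \ge 2$ and $b_\ell \le n-1$ for every interval, meaning that both $v_1$ and $v_n$ are missed by every $\Pi_\ell$ and therefore have empty profile, contradicting separation. Hence $f(P_n) \ge \lceil n/2 \rceil$ as well, completing the proof.
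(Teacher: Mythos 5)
The paper offers no proof of this observation---it is imported verbatim from Foucaud and Kov\v{s}e (their Theorem~15)---so there is no in-paper argument to compare against; what you have written is a correct, self-contained replacement. The interval/profile framework is the right one: your construction $\{v_1,v_2\},\{v_2,v_3,v_4\},\{v_4,v_5,v_6\},\ldots$ yields the profiles $\{1\},\{1,2\},\{2\},\{2,3\},\{3\},\ldots$ (with a trailing $\emptyset$ at $v_n$ when $n$ is even), uses exactly $\lceil n/2\rceil$ paths, and transfers verbatim to $C_n$; the event count $2k\ge(\text{number of transitions})$ is sound because consecutive vertices must receive distinct profiles and each interval borders at most two transitions; and your parity argument for odd $n$ correctly forces $A_1=A_n=\emptyset$. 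Two small points. First, the inference ``each transition flips exactly one coordinate, so the $n$ profiles are all distinct'' is not valid in general, even for interval systems: the paths $\{v_1\}$ and $\{v_3\}$ in $P_4$ give profiles $\{1\},\emptyset,\{2\},\emptyset$, where every transition flips exactly one coordinate yet $v_2$ and $v_4$ collide. You should justify distinctness by exhibiting the profile sequence of your specific construction, as above. Second, the delicate odd-$n$ step for $P_n$ can be avoided entirely: every path of $P_n$ is a path of $C_n$ on the same vertex set, so $f(P_n)\ge f(C_n)\ge\lceil n/2\rceil$ follows at once from your cyclic count, where no parity loss occurs.
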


\section{Complete Bipartite Graphs} \label{complete}

To initiate our study, we consider separating path systems of the vertices of complete bipartite graphs $K_{m,n}$. It is worth mentioning that vertex-separating cycle systems of $K_{m,n}$ have been studied in \cite{DBLP:journals/combinatorics/Rosendahl03}. Let $m$ and $n$ be two positive integers with $m \le n$. Note that the length of the longest path in $K_{m,n}$ is $2m$. Katona \cite{katona1966separating} provided bounds for the size of the smallest separating set system of $[n]$ where each set in the separating system has size at most $1 \le k \le n$. In fact, he showed that if $\tau(n, k)$ is the size of the smallest such separating set system, then $\frac{n}{k} \cdot \frac{\log{n}}{\log{(en/k)}} \le \tau(n, k) \le \frac{n}{k} \cdot \frac{\log{2n}}{\log{(n/k)}}$. From this result, we note that $f(K_{m,n}) = \Theta(\frac{n}{m}\cdot \frac{\log{n}}{\log{(1+n/m)}})$. Since our construction is somewhat simpler, we provide a different proof for the upper bound of $f(K_{m,n})$.


\begin{prop} \label{kmn}
    $f(K_{m,n}) = O(\frac{n}{m}\cdot \frac{\log{n}}{\log{(1+n/m)}})$, for integers $n \ge m > 0$.
\end{prop}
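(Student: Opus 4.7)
The plan is to adapt Katona's coordinate-slicing construction for separating systems to the setting of path systems in $K_{m,n}$. I set $q:=\lceil n/m\rceil$ and $s:=\lceil \log n/\log q\rceil$, so that $sq = O\bigl(\tfrac{n}{m}\cdot\tfrac{\log n}{\log(1+n/m)}\bigr)$. The edge case $n\le 2m$ (where $q=\Theta(1)$ and $|V(K_{m,n})|=\Theta(m)$) is handled by a direct balanced labeling of $V(K_{m,n})$ combined with \cref{kn}, which already yields $O(\log n)$ paths, matching the bound since $\log(1+n/m)=\Theta(1)$ in this range.

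For the main case $n\ge 2m$, I would assign each $b\in B$ a distinct codeword $\phi(b)\in\{0,\ldots,q-1\}^s$ chosen so that every slice $B_{i,d}:=\{b:\phi(b)_i=d\}$ has size within one of $m$. For each slice the induced subgraph $K_{m,n}[A\cup B_{i,d}]$ is a complete bipartite graph whose two parts differ in size by at most one, so it contains a Hamiltonian path $\Pi_{i,d}$ that spans $A\cup B_{i,d}$; crucially, this path spans all of $A$. Let $\mathcal{P}_1$ denote the resulting family of $sq$ paths.

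Then $\mathcal{P}_1$ separates every pair $b_1,b_2\in B$, because their codewords differ in some coordinate $i$ and $\Pi_{i,\phi(b_1)_i}$ contains exactly one of them; it also separates every cross pair $(a,b)\in A\times B$, since $q\ge 2$ means one can pick $d\ne\phi(b)_i$, and $\Pi_{i,d}$ then contains $a$ (spanning $A$) but not $b$. To separate pairs within $A$, I add a second family $\mathcal{P}_2$ of $\lceil\log m\rceil$ paths obtained from a binary labeling of $A$: for each bit $j$, take a Hamiltonian path of the bipartite subgraph induced by the bit-$j$-on vertices of $A$ together with an equally sized (up to one) subset of $B$. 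Since $\tfrac{n}{m}\cdot\tfrac{\log n}{\log(1+n/m)}=\Omega(\log n)$ for $n\ge 2m$, the $\log m$ term is absorbed and $|\mathcal{P}_1|+|\mathcal{P}_2|=O\bigl(\tfrac{n}{m}\cdot\tfrac{\log n}{\log(1+n/m)}\bigr)$.

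The hard part will be arranging the balanced codeword assignment so that each slice has size in $\{m-1,m,m+1\}$, which is what lets every $\Pi_{i,d}$ span all of $A$ and makes the cross-separation argument clean. When $n$ is not a convenient multiple of $m$, a few slices will be smaller, and I would then drop one or two $A$-vertices from the affected paths while rotating which vertex is dropped across the $s$ coordinates, so that no fixed $a\in A$ is omitted from every path $\Pi_{i,d}$ with $d\ne\phi(b)_i$ for any fixed $b$. Since at most $O(1)$ slices per coordinate need adjusting, this rotation preserves cross-separation and the asymptotic count.
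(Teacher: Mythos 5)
Your construction is essentially the paper's: the base-$q$ codewords $\phi(b)\in\{0,\dots,q-1\}^s$ are exactly the paper's $K$-ary tree over the large side (your slice $B_{i,d}$ is the paper's $R(\ell,i)$, the union of the subtrees rooted at the $d$-th children at level $i$), and the within-$A$ separation via a binary labeling plus \cref{kn} is also the same, so both give $sq+O(\log m)$ paths. The only divergence is how cross pairs in $A\times B$ are separated: the paper spends four dedicated paths on this (halve $A$, halve an $m$-element subset of $B$, and take one spanning path in each of the four induced complete bipartite quadrants), which removes any need for the slice paths to span all of $A$ and hence avoids your end-of-proof rotation patch entirely. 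That patch does go through, but be careful with the claim that only $O(1)$ slices per coordinate need adjusting: when $n$ is far from a multiple of $m$, \emph{every} slice in a coordinate has size $\lfloor n/q\rfloor$ or $\lceil n/q\rceil$, which can be short of $m$ by up to $m-\lfloor n/q\rfloor$; what actually saves the round-robin dropping is that the total deficit per coordinate is $qm-n-q<m$, so each $a\in A$ need be omitted from at most one path per coordinate, leaving (for $q\ge 3$) at least one path avoiding $b$'s slice that still contains $a$.
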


\begin{proof}
We build a vertex-separating path system of the given size. Let $L$ refer to the $m$ vertices on the smaller part of $K_{m,n}$ and $R$ refer to the other part of size $n$. For subsets $X \subseteq L$ and $Y \subseteq R$, we denote the induced subgraph of $K_{m,n}$ on $X \cup Y$ by $K(X, Y) := K_{m,n}[X \cup Y]$. In order to separate the vertices of $L$ from $R$, consider a subset $R' \subseteq R$ of size $m$. Partition $L$ into two almost equal size sets, $L_1$ and $L_2$. Similarly, partition $R'$ into $R'_1$ and $R'_2$. By adding four paths covering the vertex sets of $K(L_1, R'_1)$, $K(L_1, R'_2)$, $K(L_2, R'_1)$, and $K(L_2, R'_2)$, we separate the vertices of $L$ from the vertices of $R$.

Next, if $m = n$, then \cref{kn} implies that $f(K_{m,m}) = \Theta(\log{m})$. More precisely, we separate the vertices of $L$ from each other using a nice labeling of the vertices $L$. That is we assign labels $0, 1, \dots, m-1$ to the vertices of $L$ in arbitrary order. For each $1 \le i \le \lceil \log{m} \rceil$, using vertices in $R$, we create a path that spans only those vertices in $L$ whose $i$-th bit in the binary representation of their labels is $1$. Similarly, we separate the vertices of $R$ from each other.

Now assume $n > m$. First, similar to the case $K_{m,m}$, using $O(\log{m})$ paths, we separate the vertices in $L$ from each other. It remains to separate the vertices within $R$ from each other. To achieve this, define $K := \lceil \frac{n}{m} \rceil$ and consider an auxiliary tree $\mathcal{T}$ having the elements of $R$ at its leaves, in which each internal node has $K$ children. We refer to the vertices of $\mathcal{T}$ as nodes. Note that the height of $\mathcal{T}$ is $O(\frac{\log{n}}{\log{K}})$. In order to simplify the explanation, we assume that $n$ is a power of $K$ (otherwise, each internal level of $\mathcal{T}$ will contain at most one node with less than $K$ children). For each node $u$ in this tree, let $S_u$ denote the set of elements of $R$ that are stored in the subtree rooted at $u$.

For each internal level $\ell$ in the tree, and for each $i=1,2,3,\ldots,K$, let $R(\ell,i)$ be the union of all sets $S_u$, where $u$ ranges over the $i$-th child of all nodes at level $\ell$. Note that the size of $R(\ell,i)$ is at most $m$. We add a path in $K(L,R(\ell,i))$ that spans $R(\ell,i)$ to the separating path system.  

Now, we show that the set of selected paths in the previous paragraph separates the vertices of $R$ from each other. Let $r$ and $r'$ be two distinct vertices in $R$. Let $v$ and $v'$ be the leaves of $\mathcal{T}$ that store $r$ and $r'$, respectively. Let $u$ be the lowest common ancestor of $v$ and $v'$ in $\mathcal{T}$, and let $\ell$ be the level of $u$. Let $i$ be such that $r$ is in the subtree rooted at the $i$-th child of $u$, and define $i'$ similarly with respect to $r'$. Since $i \neq i'$, the vertices $r$ and $r'$ are separated by the spanning path in $K(L,R(\ell,i))$.

In this construction, $\mathcal{T}$ has $O(\frac{\log{n}}{\log{K}})$ levels, and in each level we select $O(K)$ paths. Therefore, the total number of paths used to separate the vertices of $K_{m,n}$ is $O(\log{m} + K \cdot \frac{\log{n}}{\log{K}}) = O(\log{m} + \frac{n}{m}\cdot \frac{\log{n}}{\log{(n/m)}}) = O(\frac{n}{m}\cdot \frac{\log{n}}{\log{(1+n/m)}})$, since $K = \lceil \frac{n}{m} \rceil$.
\end{proof}

As a corollary of \cref{kmn}, one can show that the value of the parameter $f(G)$ for $n$-vertex graphs $G$, asymptotically, can be as large as any sub-linear polynomial on $n$.

\begin{cor}
    Let $0 < \delta \le 1$ be a real number. For any positive integers $m$ and $n$, with $m = n^{1-\delta}$,
    $f(K_{m, n}) = {\Theta}(n^{\delta})$.
\end{cor}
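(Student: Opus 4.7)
The plan is a direct substitution into the asymptotic bound for $f(K_{m,n})$. Combining Katona's lower bound \cite{katona1966separating} with the upper bound of \cref{kmn} (as noted in the discussion preceding the proposition), we have
\[
f(K_{m,n}) \;=\; \Theta\!\left(\frac{n}{m}\cdot \frac{\log{n}}{\log{(1+n/m)}}\right).
\]
Setting $m = n^{1-\delta}$ yields $n/m = n^{\delta}$, so the expression becomes $\Theta\!\left(n^{\delta} \cdot \frac{\log{n}}{\log{(1+n^{\delta})}}\right)$.

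It then remains to check that for any fixed $\delta \in (0,1]$, the factor $\frac{\log{n}}{\log{(1+n^{\delta})}}$ lies between two positive constants. On the one hand, $\log(1+n^{\delta}) \ge \log(n^{\delta}) = \delta \log{n}$ for $n \ge 1$; on the other hand, $\log(1+n^{\delta}) \le \log(2 n^{\delta}) = 1 + \delta \log{n}$. Hence the denominator is $\Theta(\log{n})$ with constants that depend only on $\delta$, and the ratio collapses to $\Theta(1)$. Putting the two observations together gives $f(K_{n^{1-\delta},n}) = \Theta(n^{\delta})$, as claimed.

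I do not foresee any genuine obstacle in executing this plan; the argument is essentially a one-line asymptotic simplification. The only subtle point is treating $\delta$ as a fixed parameter, so that quantities such as $1/\delta$ may be absorbed into the hidden constants of the $\Theta$ notation. The statement of the corollary, which quantifies over $m$ and $n$ but fixes $\delta$ at the outset, is clearly consistent with this reading. A minor technicality is that $n^{1-\delta}$ need not be an integer, but this is harmless: replacing $m$ by $\lceil n^{1-\delta} \rceil$ changes $n/m$ by a multiplicative factor of $1 + o(1)$, which has no effect on the $\Theta$-class of the resulting bound.
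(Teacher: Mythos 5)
Your proposal is correct and follows exactly the route the paper intends: the corollary is obtained by substituting $m = n^{1-\delta}$ into the two-sided bound $f(K_{m,n}) = \Theta\bigl(\frac{n}{m}\cdot\frac{\log n}{\log(1+n/m)}\bigr)$ that comes from combining Katona's lower bound with the upper bound of the preceding proposition, and observing that the logarithmic factor is $\Theta(1)$ for fixed $\delta$. Your explicit verification of that last step (and the remark about integrality of $n^{1-\delta}$) is a sound filling-in of details the paper leaves implicit.
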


\section{Trees} \label{trees}

As mentioned in the introduction, the smallest size of vertex-separating path systems of trees has been studied by Foucaud and Kov{\v{s}}e \cite{DBLP:journals/jda/FoucaudK13} and Arrepol et al. \cite{arrepol2023separating}. In particular, Foucaud and Kov{\v{s}}e \cite{DBLP:journals/jda/FoucaudK13} show that $f(T) \le \frac{2n}{3} + O(1)$, for any $n$-vertex tree $T$. Moreover, they show that for the star $K_{1,n-1}$, $f(K_{1, n-1}) = \frac{2n}{3} + O(1)$. Thus $K_{1,n-1}$ serves as a matching lower bound example. Arrepol et al. \cite{arrepol2023separating} provided bounds as a function of the number of degree one vertices and degree two vertices in $T$, denoted by $A_1$ and $A_2$ respectively, and the number of special bare paths $\mathcal{I}$ --- the number of paths of length at least two in $T$ such that its two endpoints have a degree at least three and all the other vertices on the path have degree two. Their bound for trees reads as follows:

\begin{thm}(\cite[Theorem 3.4]{arrepol2023separating}) \label{best_known}
    Let $T$ be a tree with $A_1$ degree one vertices, $A_2$ degree two vertices, and $\mathcal{I}$ special bare paths. Then,
\[ \max \left\{\left\lceil \frac{2A_1 + A_2 - \mathcal{I}}{3} \right\rceil, \left\lceil \frac{A_1 + A_2 - \mathcal{I}}{2} \right\rceil \right\} \ \le \ f(T) \ \le \ \frac{2A_1}{3} + \frac{A_2 - \mathcal{I}}{2} + O(1).\]
\end{thm}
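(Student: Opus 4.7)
I would split the proof into a lower bound and an upper bound, and in both use the same decomposition of $T$ into three vertex types: leaves, degree-$2$ interior vertices of maximal bare paths, and branch vertices of degree at least $3$. Write $B_1, \ldots, B_t$ for the maximal bare paths of $T$ (whose internal vertices account for all of $A_2$), and note that $\mathcal{I}$ counts those $B_j$ whose two endpoints are both branch vertices. The two lower bounds and the constructive upper bound will each be read off this fixed decomposition.

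\textbf{Lower bound.} The plan is a double-counting on the endpoints of the paths of an optimal system $\mathcal{S}=\{\Pi_1,\ldots,\Pi_k\}$. Assign to each vertex $v$ its signature $\sigma(v)\in\{0,1\}^k$ recording which $\Pi_i$ contain $v$. Two elementary facts drive everything: a leaf of $T$ can lie in a $\Pi_i$ only as one of its two endpoints, so $\sum_{v\text{ leaf}}|\sigma(v)|\le 2k$; and the intersection of any $\Pi_i$ with any maximal bare path $B$ is a (possibly empty) subpath of $B$, so along $B$ each coordinate of $\sigma$ flips at most twice. The first fact combined with a greedy weight-budget argument (at most one signature of weight $0$, at most $k$ of weight $1$, at most $k(k-1)/2$ of weight $2$, $\ldots$) already yields $A_1\le \frac{3k}{2}+O(1)$, i.e.\ $k\ge \frac{2A_1}{3}$. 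To upgrade this to $\lceil(2A_1+A_2-\mathcal{I})/3\rceil$ I would charge one extra budget unit per internal degree-$2$ vertex and show that the two endpoints of each $B_j\in\mathcal{I}$ waste a total of one unit (since those endpoints are not themselves degree-$2$), recovering exactly the $-\mathcal{I}$ correction. The parallel bound $\lceil(A_1+A_2-\mathcal{I})/2\rceil$ I would obtain from a pure flip count: the $k$ paths contribute at most $2k$ coordinate flips across $T$, while each leaf forces a flip on its unique incident edge and each internal degree-$2$ vertex of some $B_j$ forces a flip inside $B_j$, with exactly one unavoidable redundancy per $B_j\in\mathcal{I}$.

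\textbf{Upper bound.} I would build the separating system from two essentially local gadgets plus a small overlap trick. Around each branch vertex $u$ with $d_u$ pendant leaves, I would use the star-type construction behind \cref{kn} and \cref{kmn}: take paths of length $2$ that join pairs of these leaves through $u$, and assign each leaf of $u$ a distinct signature of Hamming weight at most $2$; this costs $\lceil 2d_u/3\rceil+O(1)$ paths and, summed over all such $u$, contributes $\frac{2A_1}{3}+O(1)$ paths. For each bare path $B_j$ of length $r_j$ I would invoke \cref{cycle} locally to separate its $r_j-1$ internal degree-$2$ vertices using $\lceil(r_j-1)/2\rceil$ subpaths of $B_j$, summing to $\frac{A_2}{2}+O(1)$. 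The $\mathcal{I}$-saving comes from piggy-backing: for each $B_j\in\mathcal{I}$, one subpath gadget can be extended through an endpoint of $B_j$ and glued to an incident leaf-gadget path, merging two paths into one and saving one path per such $B_j$. A final check confirms that the branch vertices themselves receive distinct signatures automatically from the diversity of gadgets they touch.

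\textbf{Main obstacle.} The delicate part will be making the $-\mathcal{I}$ discounts in both halves consistent and tight: in the construction one must verify that the piggy-back glueing can be carried out simultaneously at every $B_j\in\mathcal{I}$ without destroying signatures already fixed by the star gadgets; in the lower bound one has to argue that the two units wasted per $B_j\in\mathcal{I}$ cannot be recovered by any global rearrangement of $\mathcal{S}$. Tracking the additive $O(1)$ overhead carefully through both constructions, so that the upper and lower bounds match up to constants, is where most of the technical bookkeeping will live.
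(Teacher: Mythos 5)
First, a point of orientation: the paper does not prove this statement at all. It is quoted verbatim from Arrepol et al.\ (their Theorem~3.4) and used only as background; the authors even remark immediately afterwards that their own lower bound $f(T)\ge \frac{n}{4}+1$ does not follow from it. So there is no in-paper proof to compare your attempt against, and the proposal has to stand on its own merits.

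On its own terms, the proposal has the right general shape (endpoint/signature counting for the lower bound, local gadgets for the upper bound; this is also the spirit of the paper's \cref{ez} and of its $n/4$ lower bound), but several load-bearing steps are wrong or unsubstantiated. In the lower bound, the claim that the $k$ paths contribute at most $2k$ coordinate flips across $T$ is false: in the star $K_{1,n-1}$ a single path of length two contains the centre, so its membership indicator flips on $n-3$ pendant edges. What is bounded by $2k$ is the number of path \emph{endpoints}; flips on leaf edges must instead be controlled by the separate observation that a leaf can occur on a path only as an endpoint, so both of your budgets (the weight budget for leaves and the flip budget for bare paths) need to be routed through endpoints, and the two $-\mathcal{I}$ corrections are asserted (``waste a total of one unit'') rather than derived. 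In the upper bound the accounting does not close: summing $\lceil 2d_u/3\rceil+O(1)$ over all branch vertices $u$ carrying pendant leaves gives $\frac{2A_1}{3}$ plus a term proportional to the number of such vertices, and summing $\lceil (r_j-1)/2\rceil$ over all bare paths gives $\frac{A_2}{2}$ plus a term proportional to the number of bare paths; neither error term is $O(1)$ (consider a caterpillar in which every branch vertex carries exactly one leaf). To obtain a genuine additive constant the gadgets must share paths globally across the whole tree, and then the ``final check'' that vertices in different gadgets are mutually separated --- which you dismiss as automatic --- becomes the main technical content. As it stands the proposal is a plausible outline of the Arrepol et al.\ argument, not a proof.
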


In this section, we present a tight lower bound for the size of an optimal vertex-separating path system of any $n$-vertex tree as a function of $n$. Our lower bound does not follow directly from \cref{best_known}. For example, one can check that if $T$ is an $n$-vertex tree obtained from a binary tree where every edge between two vertices of degree three is subdivided once, then \cref{best_known} implies that $f(T) \ge \frac{2n}{9}$.

\begin{thm} \label{trees_label}
    Let $T$ be an $n$-vertex tree. Then $f(T) \ge \frac{n}{4}$. Moreover, this lower bound is tight (up to an additive constant), since there are infinitely many trees $T$ with $f(T) \le \frac{|V(T)|}{4} + O(1)$.
\end{thm}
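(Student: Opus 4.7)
The plan is to prove the lower bound $f(T)\ge n/4+1$ and then exhibit a family of trees attaining it up to an additive constant.

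For the lower bound, I would fix a minimum vertex-separating path system $\mathcal{S}=\{\Pi_1,\ldots,\Pi_k\}$ of $T$ and aim for $n\le 4(k-1)$. Two structural facts would drive the counting. First, a leaf $\ell$ of $T$ lies on a path $\Pi\in\mathcal{S}$ only when $\ell$ is an endpoint of $\Pi$, since otherwise $\Pi$ would have to enter and leave $\ell$ through its unique neighbor and repeat a vertex. Hence the signature of $\ell$ is exactly $\{i:\ell\text{ is an endpoint of }\Pi_i\}$, and the $2k$-budget of endpoints must realize $n_1$ distinct subsets of $[k]$ whose sizes sum to at most $2k$. A direct enumeration of the lightest available subsets (one empty, at most $k$ singletons, then pairs at cost $2$ each) forces $n_1\le 3k/2+1$, giving $k\ge 2(n_1-1)/3$. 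Second, for every maximal bare path $B$ with interior degree-two vertices $v_1,\ldots,v_b$, each $\Pi\in\mathcal{S}$ meets $B$ in a contiguous subpath (a degree-two vertex cannot branch off $B$), and any endpoint of $\Pi\cap B$ strictly inside $\{v_1,\ldots,v_b\}$ must itself be a tree-endpoint of $\Pi$. Since separating $b$ collinear points by intervals requires at least $\lceil b/2\rceil$ distinct intervals, $k\ge\lceil b/2\rceil$ for every bare path.

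I would then combine these two bounds by a charging scheme, assigning to each $\Pi_i$ at most four credited vertices---its two tree-endpoints together with two further interior ``witness'' vertices on bare paths where $\Pi_i$ contributes an internal endpoint---and checking that every vertex of $T$ is credited by some path, yielding $n\le 4(k-1)$ after a small boundary adjustment. The main obstacle I expect is the ``mixed'' regime, in which $T$ has moderate leaf count and several moderate-length bare paths and neither of the above bounds is individually tight; the amortization between the two regimes then has to be done carefully. A cleaner alternative I would try is induction on $n$: locate a four-vertex terminal ``leaf gadget'' in $T$ whose removal produces a tree $T'$ admitting a separating system derived from $\mathcal{S}$ by dropping at most one path, so that $f(T)\ge f(T')+1\ge(n-4)/4+1+1=n/4+1$ by the inductive hypothesis.

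For the tightness I would exhibit an infinite family of $n$-vertex trees $T_n$ with $f(T_n)\le n/4+O(1)$. A natural candidate is the once-subdivided binary tree: take a binary tree $T_0$ on about $n/2$ vertices and subdivide every edge once, producing roughly $n/4$ internal bare paths of length two. For such a $T$ the lower bound of \cref{best_known} already yields $f(T)\ge n/4+O(1)$, so it suffices to construct a matching separating system of that size. I would build one by combining $O(\log n)$ bit-labeling paths that span from one leaf through the root of $T_0$ to another leaf, with a carefully organized batch of local correction paths---one per bare path---chosen so that every signature is distinct. The delicate part is fitting the total count into $n/4+O(1)$, and I expect this combinatorial accounting to be the technically most involved step of the construction.
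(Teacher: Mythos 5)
Your lower-bound plan has the right two ingredients --- forced path-endpoints at leaves (your signature/budget count giving $k \ge \tfrac{2(n_1-1)}{3}$ is a clean way to get the leaf contribution) and forced endpoints in the interiors of bare paths --- and this is indeed the paper's strategy (its \cref{ez}, a count of ``good edges,'' and a discharging argument). But the proposal stops exactly where the work is, and two of the stated steps would fail as written. First, the per-bare-path bound $k \ge \lceil b/2\rceil$ is far too weak: on a spider with three legs of length $n/3$ your two bounds give only $k \ge n/6$. What is actually needed is the \emph{global} count of interior endpoints, roughly $\sum_B (b_B-1)$ over all bare paths $B$ (the paper's $A_2 - A_{\ge 3}+1$ good edges, each forcing its own endpoint at a degree-two vertex). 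Second, the ``four credited vertices per path'' scheme double-counts: as you yourself observe, an interior witness on a bare path is necessarily a tree-endpoint of that path, so each path has only two endpoints to distribute between leaves and bare-path interiors, and each endpoint must on average pay for two vertices; making that work in the mixed regime is precisely the case analysis the paper carries out (splitting on $A_2 \ge n/2$ versus $A_2 < n/2$, with a discharging argument and the inequality $A_{\ge 3} \le A_1 - 2$ to absorb the branch vertices, which your scheme never credits at all). The alternative ``remove a four-vertex terminal gadget'' induction also cannot be the whole story, since a star contains no such gadget.

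The tightness half is where the proposal genuinely goes wrong. The fully once-subdivided binary tree is not known to satisfy $f(T) \le \tfrac{n}{4}+O(1)$: the upper bound of \cref{best_known} gives only $\tfrac{7n}{24}+O(1)$ for it, and your construction cannot be repaired as described because in a tree one cannot separate $\Theta(n)$ leaves with $O(\log n)$ ``bit-labeling'' paths --- the nice-labeling trick of \cref{kn} needs spanning paths of arbitrary leaf subsets, which trees do not provide; indeed your own budget argument shows that the $\approx n/4$ leaves alone force $\approx n/6$ paths ending at leaves, on top of the $\approx n/4$ local correction paths. The paper's construction is different and more delicate: it takes the complete binary tree $B_h$ and uses one path between consecutive leaves $i$ and $i+1$ for each $i \not\equiv 0 \pmod 4$ plus one wrap-around path (about $\tfrac34$ of a path per leaf), verifies separation via the face structure of the associated planar drawing, and then inflates the vertex count to four times the number of paths by subdividing only a carefully chosen subset of edges and attaching pendant vertices to selected leaves --- crucially \emph{not} subdividing every edge, since the fixed path system could not separate that many subdivision vertices. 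You would need either to prove the $\tfrac{n}{4}+O(1)$ upper bound for your family by a genuinely new construction, or to switch to a family of this padded type.
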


We start with a simple observation relating the low-degree vertices of a graph with the endpoint of paths in an arbitrary vertex-separating path system. 

 \begin{obs} \label{ez}
    Let $G$ be a graph and let $\mathcal{S}$ be a vertex-separating path system of $G$.
    \begin{enumerate}
        \item[(I)] Let $A_1$ be the number of degree one vertices in $G$. At least $A_1 - 1$ degree one vertices are the endpoints of some path in $\mathcal{S}$.
        \item[(II)] Let $\Pi \in \mathcal{S}$ be a non-trivial path such that both endpoints, say  $u$ and $v$, of $\Pi$ are degree one vertices of $G$. There exists a path $\Pi' \in \mathcal{S}$, $\Pi'\not=\Pi$, such that $\Pi'$ contains exactly one of $u$ and $v$ as an endpoint.
        \item[(III)] Let $u$ and $v$ be two adjacent degree two vertices of $G$. There is a path in $\mathcal{S}$ that ends in exactly one of $u$ and $v$.
        \item[(IV)] Let $u$ be a degree one vertex of $G$ that is adjacent to a degree two vertex $v$. If there is no trivial path containing $u$ in $\mathcal{S}$, there exists a path in $\mathcal{S}$ with $v$ as an endpoint that does not contain $u$.
    \end{enumerate}
\end{obs}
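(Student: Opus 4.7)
The plan is to deduce each of the four parts from the vertex-separating property applied to a carefully chosen pair of vertices, combined with two elementary structural facts: (a) a degree-one vertex $v$ that lies on a path $\Pi$ must be an endpoint of $\Pi$, since its only graph-neighbor can account for at most one incidence on the path; and (b) a degree-two vertex $v$ that is an \emph{internal} vertex of a path $\Pi$ must have both of its graph-neighbors on $\Pi$. Each part is an immediate consequence of these two observations once the correct pair is fed into the separation condition.

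For (I), I would note that by (a), any degree-one vertex appearing on some path of $\mathcal{S}$ is automatically an endpoint of that path, so the only degree-one vertices that can fail the conclusion are those that lie on no path of $\mathcal{S}$ at all. If two such vertices $u$ and $v$ existed, no element of $\mathcal{S}$ would separate them, a contradiction; hence at most one degree-one vertex is missed, and the remaining $A_1 - 1$ are endpoints. For (II), apply the separating property to the endpoints $u,v$ of $\Pi$ to obtain a path $\Pi' \in \mathcal{S}$ containing exactly one of them; by (a) this vertex is an endpoint of $\Pi'$, and the absent vertex is trivially not an endpoint, so exactly one of the endpoints of $\Pi'$ lies in $\{u,v\}$. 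For (III), apply the separating property to the adjacent pair $u,v$ to get a path $\Pi \in \mathcal{S}$ containing, say, $u$ but not $v$; by (b), $u$ cannot be internal to $\Pi$ because its neighbor $v$ is absent from $\Pi$, so $u$ is an endpoint, and $\Pi$ ends in exactly one of $u,v$.

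Part (IV) is a two-case argument that is the only place the hypotheses interact nontrivially. Feeding $u$ and $v$ into the separating property produces a path $\Pi \in \mathcal{S}$ containing exactly one of them. If $u \in \Pi$ and $v \notin \Pi$, then by (a) $u$ is an endpoint of $\Pi$; but $u$'s only neighbor $v$ is missing, so $\Pi$ must be the trivial path $\{u\}$, contradicting the hypothesis that no trivial path of $\mathcal{S}$ contains $u$. Therefore $v \in \Pi$ and $u \notin \Pi$, and the degree-two argument (b) from (III) forces $v$ to be an endpoint of $\Pi$, giving the desired path. The main ``obstacle'' is really just bookkeeping: making sure in (IV) that the ``no trivial path containing $u$'' hypothesis is invoked in exactly the right case and that the degree-one/degree-two structural facts are applied to the correct vertex in each part.
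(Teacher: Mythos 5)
Your proposal is correct. The paper states this observation without proof, and your argument — reducing each part to the separation condition on a well-chosen pair plus the two structural facts that a degree-one vertex on a path must be a path-endpoint and that an internal degree-two vertex must have both graph-neighbors on the path — is exactly the routine verification the authors leave implicit.
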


We will use this observation to prove a tight lower bound for $f(T)$ where $T$ is an arbitrary $n$-vertex tree.

\begin{prop}
Let $n > 2$ be an integer and $T$ be an $n$-vertex tree. Then $f(T) \ge \frac{n}{4}$.
\end{prop}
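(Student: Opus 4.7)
The plan is to fix an optimum vertex-separating path system $\mathcal{S}$ of $T$, set $k = f(T) = |\mathcal{S}|$, and bound from below the total number of (path, endpoint) incidences. Since each path in $\mathcal{S}$ has at most two endpoints, this count is at most $2k$; showing it is at least $\frac{n}{2}+2$ then yields $f(T) \ge \frac{n}{4}+1$.

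Write $A_1, A_2, A_{\ge 3}$ for the number of vertices of $T$ of degree one, two, and at least three, so that $n = A_1 + A_2 + A_{\ge 3}$. I will also invoke the standard tree identity $A_1 \ge A_{\ge 3} + 2$ (which follows from the handshake lemma applied to $T$) to control the balance between these classes.

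First I would apply \cref{ez}(I) to contribute $A_1 - 1$ endpoint incidences at the leaves. Then I would apply \cref{ez}(III) to the subgraph of $T$ induced by degree-two vertices; each of its connected components is a path, and on a path with $\ell$ vertices the minimum vertex cover is $\lfloor \ell/2 \rfloor$, forcing that many further endpoint incidences at degree-two vertices. Next I would apply \cref{ez}(IV) to each leaf $u$ adjacent to a degree-two vertex $v$, obtaining either a new endpoint incidence at $v$ (contributed by a path that does not contain $u$) or a trivial path $\{u\} \in \mathcal{S}$, which in either case costs one incidence. Finally, \cref{ez}(II) would supply extra multiplicity at the leaves whenever $\mathcal{S}$ contains a non-trivial leaf-to-leaf path. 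Summing these contributions and applying $A_1 \ge A_{\ge 3} + 2$ should yield $2k \ge \tfrac{n}{2}+2$ and hence the claim.

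The main obstacle is organizing these contributions without double counting: a single endpoint may simultaneously certify a (III)-constraint (as a degree-two vertex adjacent to another such) and a (IV)-constraint (as the degree-two neighbour of a leaf), and (II) only provides extra multiplicity at specific leaves. My plan to keep the bookkeeping clean is to work on the bare-path decomposition of $T$ obtained by suppressing its degree-two vertices: charge each bare path of $\ell$ degree-two vertices with $\lfloor \ell/2 \rfloor$ endpoint incidences from (III), and separately charge its two attachment vertices in the reduced tree $T^\ast$ — which are either leaves (contributing via (I) or (IV)) or degree-$\ge 3$ vertices (absorbed by the $A_1 \ge A_{\ge 3}+2$ budget) — so that each incidence is counted exactly once.
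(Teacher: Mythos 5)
Your framework---lower-bounding endpoint incidences at low-degree vertices via \cref{ez} and dividing by two---is exactly what the paper does in its first case (when $A_2 \ge \frac{n}{2}$), but it is not strong enough to prove the proposition in general, and the contributions you enumerate do not sum to $\frac{n}{2}+2$. The problem is trees with many vertices of degree at least three whose degree-two vertices are isolated and not adjacent to leaves. Take the paper's own motivating example: a complete binary tree in which every edge joining two degree-three vertices is subdivided once. There $A_1 \approx n/3$, no two degree-two vertices are adjacent (so \cref{ez}(III) contributes nothing), no leaf is adjacent to a degree-two vertex (so \cref{ez}(IV) contributes nothing), and a path system need not contain any leaf-to-leaf path (so \cref{ez}(II) contributes nothing). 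Your counted incidences then total only $A_1 - 1 \approx n/3$, giving $2k \ge n/3$, i.e.\ $k \ge n/6$, which falls short of $\frac{n}{4}+1$. No reorganization via the bare-path decomposition can fix this, because the deficit is not double counting but a structural loss in the inequality ``incidences $\le 2k$'': a path with one endpoint at a leaf and the other at a degree-$\ge 3$ vertex uses up both of its endpoint slots while contributing only one counted incidence.

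The paper's proof closes exactly this gap with a discharging argument rather than raw incidence counting: each path carries one unit of charge, and a path whose second endpoint is \emph{not} at a counted vertex ($L \cup D$) gives its \emph{entire} unit to its counted endpoint (rule (II) of the discharging), rather than half. Combined with \cref{ez}(II) and (IV), this shows that at least half of the leaves outside the set $X$ absorb a full unit, yielding the key inequality $|\mathcal{S}| \ge \frac{3}{4}|L| - \frac{1}{4}r + ch_D$, whose leading term $\frac{3}{4}A_1$ (not $\frac{1}{2}A_1$) is what rescues the case $A_{\ge 3} > A_2$. You would need to replace your ``$2k \ge$ total incidences'' bookkeeping with a per-path charge of this kind, and you would also need to handle trivial paths at leaves explicitly (the paper removes them by an induction on their number), since a trivial path occupies a whole element of $\mathcal{S}$ while supplying only one vertex at which to record an endpoint.
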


\begin{proof} Let $T$ be an arbitrary $n$-vertex tree. Let $A_1$, $A_2$, and $A_{\ge3}$ be the number of degree one, degree two, and degree at least three vertices of $T$, respectively. Let $\mathcal{S}$ be a vertex-separating path system of $T$. If $T$ is a path then by \cref{cycle}, $|\mathcal{S}| = \lceil\frac{n}{2}\rceil \ge \frac{n}{4}$. Therefore, we can assume $T$ is rooted at a vertex of degree at least three. Let $\phi$ be the number of leaves of $T$ that create trivial paths in $\cal S$. We prove the lower bound by induction on pair $(\phi, n)$ in lexicographic order. For the base case, assume that $\phi = 0$.

We say that an edge $e = \{x, y\}$ in $T$ is \defin{good} if $x$ is a degree two vertex of $T$ and $y$ is the only child of $x$ such that $y$ is either a vertex of degree at most two or there is no path in $\cal S$ that contains $y$ but does not contain $x$.

\begin{clm}
    For each good edge $e \in E(T)$, there exists a distinct endpoint of some path in $\cal S$ located on a degree two vertex of $e$.
\end{clm}

\begin{proof}
Let $\{x,y\} \in E(T)$ be a good edge where $x$ is a vertex of degree two and $y$ is the child of $x$ in $T$. We have three cases:

\begin{enumerate}
    \item[(I)] $y$ is a vertex of degree one. By \cref{ez} (IV) and since there is no trivial path in $\cal S$ ($\phi = 0$), there is a path $\Pi \in \cal S$ such that one endpoint of $\Pi$ is on the degree two vertex $x$ and $\Pi$ does not contain $y$.

    \item[(II)] $y$ is a vertex of degree two. By \cref{ez} (III), there is a path $\Pi \in \mathcal{S}$ that ends in exactly one of $x$ and $y$; in fact, $\Pi$ separates $x$ and $y$.

    \item[(III)] $y$ is a vertex of degree at least three and every path of $\cal S$ that contains $y$ also contains $x$. Then, in order to separate $x$ and $y$, there exists a path in $\cal S$ that ends at $x$ not containing $y$.

\end{enumerate}

For a good edge $e \in E(T)$, according to these three cases, an endpoint of some path in $\cal S$ is at a degree two endpoint of $e$. Let $u$ be a vertex of degree two, then each non-trivial path with one endpoint at $u$ contains one of the neighbors of $u$. Therefore, an endpoint of a non-trivial path in $\cal S$ does not separate endpoints of two good edges. If a degree two vertex is a trivial path in $\cal S$, we consider two endpoints for that trivial path. Therefore, each good edge has a distinct corresponding endpoint of a path from $\cal S$.
\renewcommand{\qedsymbol}{\scalebox{1.3}{$\triangle$}}
\end{proof}

\begin{clm}
    The tree $T$ has at least $A_2 - |S|$ good edges.
\end{clm}

\begin{proof}
Let $u$ be a vertex of degree two, and $v$ be its only child in $T$. If $v$ has degree at most two, then the edge $\{u,v\}$ is a good edge. For each path $\Pi \in \cal S$, only the parent of one vertex in $V(\Pi)$ is not in $\Pi$. Hence, there are at most $|\cal S|$ degree-two vertices $u \in V(T)$ such that its child $v$ in $T$ has degree at least three, and there is a path in $\cal S$ that contains $v$ but does not contain $u$. Therefore, there are at least $A_2 - |S|$ good edges in $T$. \renewcommand{\qedsymbol}{\scalebox{1.3}{$\triangle$}}
\end{proof}

Let $L$ be the set of degree-one vertices in $T$, and note that $|L| = A_1$. Let $D$ be the set of degree two vertices $v \in V(T)$, for which there exists a path $\Pi$ in $\mathcal{S}$ that has $v$ as an endpoint.
    We use a charging scheme to prove the lower bound. We assign a charge of $1$ per path in $\mathcal{S}$. Then we consider the following discharging rules. Let $\Pi \in \mathcal{S}$ be a path with $u$ and $v$ as endpoints.
        \begin{enumerate}
            \item[(I)] If $u, v \in L \cup D$, then both $u$ and $v$ get $\frac{1}{2}$ charge from $\Pi$.
            \item[(II)] If $u \in L \cup D$ and $v \notin L\cup D$, then $u$ gets $1$ charge from $\Pi$.
        \end{enumerate}
        
    After discharging, every vertex in $D$ gets at least $\frac{1}{2}$ charge. Denote by $ch_D$ the total charge stored at the vertices of $D$. Recall that for each good edge $e \in E(T)$, there exists at least one endpoint of some path in $\cal S$ located on a degree two vertex of $e$. Note that these endpoints are distinct. Therefore, $ch_D$ is at least $\frac{1}{2}(A_2 - |S|)$.
    
    Let $X$ be the set of all vertices in $L$ such that there exists a path $\Pi \in \cal S$ with one endpoint in $X$ and one endpoint in $D$. Let $r = |X|$. By definition of $X$, $\frac{r}{2}$ is another lower bound for $ch_D$, that is $ch_D \ge \frac{r}{2}$.
    Then by \cref{ez} (II, IV), at least $\frac{2}{3}(|L| - r - 1)$ charge is located on vertices in $L\setminus X$ (cf. \cite[Proposition 12]{DBLP:journals/jda/FoucaudK13}). Hence,
        \[|\mathcal{S}| \ge \frac{2}{3}(|L| - r - 1) + \frac{1}{2}r + ch_D = \frac{2}{3}A_1 - \frac{1}{6}r + ch_D - \frac{2}{3}.\ \ \ \ \ \ \ (\spadesuit) \]
        
    We consider two cases:
    \begin{enumerate}
        \item[(a)] $r \ge A_2 - |\mathcal{S}|$: By $(\spadesuit)$ and $ch_D \ge \frac{r}{2}$, we have
            \[|\mathcal{S}| \ge \frac{2}{3}A_1 - \frac{1}{6}r + ch_D - \frac{2}{3} \ge \frac{2}{3}A_1 + \frac{1}{3}r - \frac{2}{3} \ge \frac{2}{3}A_1 +\frac{1}{3} (A_2 - |\mathcal{S}|) - \frac{2}{3}. \]
        \item[(b)] $r < A_2 - |\mathcal{S}|$: By $(\spadesuit)$ and $ch_D \ge \frac{1}{2}(A_2 - |\mathcal{S}|)$, we have
            \[|\mathcal{S}| \ge \frac{2}{3}A_1 - \frac{1}{6}r + ch_D - \frac{2}{3} \ge \frac{2}{3}A_1 - \frac{1}{6} (A_2 - |\mathcal{S}|) + \frac{1}{2} (A_2 - |\mathcal{S}|) - \frac{2}{3} \ge \frac{2}{3}A_1 +\frac{1}{3} (A_2 - |\mathcal{S}|) - \frac{2}{3}. \]
    \end{enumerate} 
    In both cases, we have $|\mathcal{S}| \ge \frac{2}{3}A_1 +\frac{1}{3} (A_2 - |\mathcal{S}|) - \frac{2}{3}$, that is
        \[ |\mathcal{S}| \ge \frac{1}{2}A_1 + \frac{1}{4}A_2 - \frac{1}{2} = \frac{1}{4}A_1 + \frac{1}{4}A_2 + \frac{1}{4}A_{1} - \frac{1}{2} \ge \frac{1}{4}A_1 + \frac{1}{4}A_2 + \frac{1}{4}(A_{\ge3} + 2) - \frac{1}{2} = \frac{n}{4}.\]

    For the induction step, assume $\phi > 0$ and let $v \in V(T)$ be a leaf that creates a trivial path in $\cal S$. We consider two cases:
    
    Let $u \in V(T)$ be the only neighbor of $v$ in $T$. Assume $u$ creates a trivial path in $\cal S$ and the edge $(v, u)$ is a path in $\cal S$. Let $\cal S'$ be a set obtained by removing the trivial path corresponding to $v$ from $\cal S$. Observe that $\cal S'$ is a vertex-separating path system of $T$. By induction hypothesis, we have $|\mathcal{S}| > |\mathcal{S}'| \ge \frac{n}{4}$.
    
    Otherwise, let $T'$ be the tree that is obtained from $T$ by removing $v$. Let $\cal S'$ be a vertex-separating path system of $T'$ obtained from $\cal S$ by excluding trivial path at $v$ and removing $v$ from every path in $\cal S$. Since the number of trivial paths in $\cal S'$ is not greater than $\phi$ and the number of vertices of $T'$ is one less than the number of vertices of $T$, we have $|\mathcal{S}'| \ge \frac{n-1}{4}$. Therefore $|\mathcal{S}| \ge \frac{n-1}{4} + 1 > \frac{n}{4}$.
\end{proof}

In order to prove the tightness, we build infinitely many trees $T$ such that their vertices can be separated using $\frac{|V(T)|}{4} + O(1)$ paths.

\begin{prop}
    There are infinitely many trees $T$ with $f(T) = \frac{|V(T)|}{4} + O(1)$.
\end{prop}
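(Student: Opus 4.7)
The plan is to exhibit an explicit infinite family realizing the lower bound up to an additive constant. For each integer $\ell\ge 3$, let $B_\ell$ be the \emph{binary caterpillar} with spine $c_1 c_2 \cdots c_{\ell-2}$, one pendant leaf at each middle center $c_2,\ldots,c_{\ell-3}$, and two pendant leaves at each of the endpoint centers $c_1$ and $c_{\ell-2}$; then $B_\ell$ has exactly $\ell$ leaves and $\ell-2$ internal vertices of degree three. Let $T_\ell$ be obtained from $B_\ell$ by subdividing every edge once, so that $|V(T_\ell)|=4\ell-5$. The previous proposition gives $f(T_\ell)\ge \lceil(4\ell-5)/4+1\rceil=\ell$, so it suffices to produce a vertex-separating path system of $T_\ell$ of size $\ell$.

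Label the leaves $l_1,\ldots,l_\ell$ in depth-first order along $B_\ell$, so that the sibling pairs at the two ends are $(l_1,l_2)$ and $(l_{\ell-1},l_\ell)$, and the middle leaves $l_3,\ldots,l_{\ell-2}$ are attached to $c_2,\ldots,c_{\ell-3}$ in order. For each $i$ write $x_i$ for the subdivision vertex adjacent to $l_i$, and for each spine edge $c_j c_{j+1}$ write $y_j$ for its subdivision. For $i=1,\ldots,\ell$ let $\Pi_i$ be the unique path in $T_\ell$ from $l_i$ to $x_{i+1}$, with indices taken modulo $\ell$. I would then claim that $\mathcal{S}=\{\Pi_1,\ldots,\Pi_\ell\}$ is a vertex-separating path system, which yields $f(T_\ell)\le \ell=|V(T_\ell)|/4+O(1)$.

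The verification is the technical core of the argument and consists of a signature computation. Writing $e_k$ for the $k$-th standard basis vector in $\{0,1\}^\ell$, one checks that the signatures fall into four clean classes: each leaf $l_i$ lies only on $\Pi_i$ and has signature $e_i$; each pendant subdivision $x_i$ is the endpoint of $\Pi_{i-1}$ and an interior vertex of $\Pi_i$, giving signature $e_{i-1}+e_i$ cyclically; each spine subdivision $y_j$ is traversed only by the wrap-around path $\Pi_\ell$ and by the short path $\Pi_{j+1}$ that crosses the edge $c_j c_{j+1}$, giving signature $e_{j+1}+e_\ell$; and each center $c_j$ is traversed by exactly $\Pi_j$, $\Pi_{j+1}$, and $\Pi_\ell$, giving signature $e_j+e_{j+1}+e_\ell$. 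Comparing these four patterns---weight one, cyclically consecutive weight-two pairs, weight-two pairs containing coordinate $\ell$, and weight-three triples containing coordinate $\ell$---shows they are pairwise disjoint, so all $4\ell-5$ signatures are distinct and $\mathcal{S}$ separates $V(T_\ell)$.

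The main obstacle I anticipate is the boundary bookkeeping at the two endpoint centers: the short paths $\Pi_1$ and $\Pi_{\ell-1}$ remain entirely within a single center and traverse no spine subdivision, and the cyclic indexing for $x_1$ makes its signature wrap to include coordinate $\ell$. A short case analysis confirms that the generic signature formulas still hold at $x_1$, $c_1$, and $c_{\ell-2}$, after which the construction works uniformly for every $\ell\ge 3$ and furnishes infinitely many trees with $f(T_\ell)=\ell=|V(T_\ell)|/4+O(1)$.
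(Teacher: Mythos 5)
Your construction is correct, and it takes a genuinely different route from the paper. The paper builds its tight family from complete binary trees: it takes the paths between consecutively labelled leaves of $B_h$, subdivides the edges in the top levels and every second edge between levels $h-2$ and $h-1$, attaches pendant edges to half of the bottom leaves, and then argues separation via an informal inductive/planarity argument (pairs of vertices sharing at most one face of the leaf-augmented planar drawing), yielding trees on $3\cdot 2^{h-1}-3$ vertices separated by $3\cdot 2^{h-3}+2$ paths. You instead use subdivided caterpillars and verify separation by an explicit signature computation, which I checked in detail: the leaves get the weight-one signatures $\{i\}$, the pendant subdivisions the cyclic pairs $\{i-1,i\}$, the spine subdivisions the pairs $\{j+1,\ell\}$ with $2\le j+1\le \ell-2$, and the centres the triples $\{j,j+1,\ell\}$, and these $4\ell-5$ sets are indeed pairwise distinct (the only potential clash, between the cyclic pairs containing $\ell$, namely $\{\ell-1,\ell\}$ and $\{\ell,1\}$, and the spine pairs $\{k,\ell\}$, is avoided since the latter have $2\le k\le \ell-2$). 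Your approach buys several things: the family is denser (one tree for every $\ell\ge 3$ rather than one per power of two), the verification is fully explicit rather than sketched, and the bound is attained exactly, $f(T_\ell)=\ell=\lceil |V(T_\ell)|/4+1\rceil$, matching the lower bound of the preceding proposition with no slack. The paper's construction has the side benefit of also separating (almost all) pairs of edges, which your caterpillars do not obviously do, but that property is not needed for the statement. The only cosmetic caveat is the degenerate case $\ell=3$, where the "two endpoint centres" coincide and the tree is a subdivided $K_{1,3}$; the signature argument still goes through there, as you note.
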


\begin{proof}
Consider a planar drawing of the complete binary tree $B_h$ of height $h-1$, with $2^h - 1$ vertices, for a positive integer $h$. We label the leaves of $B_h$ with the numbers $0, 1, 2, \ldots, 2^{h-1}-1$ from left to right. We construct a vertex-separating path system, denoted by $\mathcal{S}$, for $B_h$. $\mathcal{S}$ contains the unique path between the leaves labeled $i$ and $i+1$, for each $i$ with $0 \le i < 2^{h-1}$ and $i \not\equiv 0$ (mod 4), and the unique path between the leaves labeled $0$ and $2^{h-1}-1$ (see \cref{n4trees}). If we connect the leaves labeled with consecutive numbers by an edge, we obtain a planar graph where each path corresponds to a face of this graph. Due to the properties of the complete binary tree, every pair of non-adjacent vertices in this planar graph is incident to at most one common internal face. Using a simple inductive argument, it can be shown that $\mathcal{S}$ is indeed a vertex-separating path system of $B_h$. Moreover, every pair of edges is separated from each other, meaning that every pair of distinct edges $e_1$ and $e_2$ in $E(T)$ are separated by a path in $\mathcal{S}$ (Except the edges incident to the root).

\begin{figure} [t]
  \begin{center}
    \includegraphics[scale=0.85]{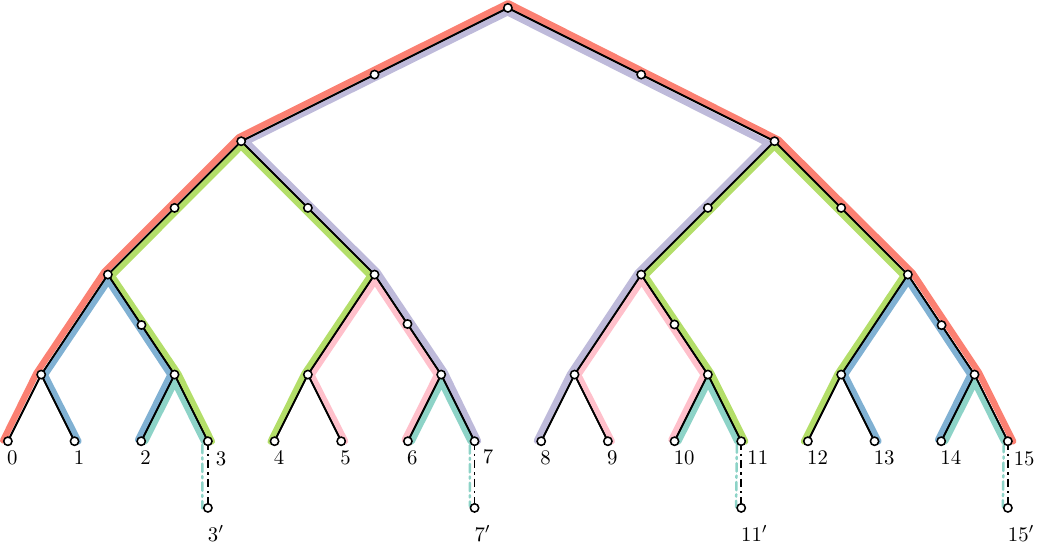}
  \end{center}
  \caption{The tree $T_5$ obtained from the complete binary tree of height $4$.}
  \label{n4trees}
\end{figure}

We number the levels in $B_h$ from $1$ to $h$, where the root is at level one. Next, by adding $2^{h-2}-2$ vertices, we subdivide every edge with both endpoints at the levels at most $h-2$. Then, consider all the edges with one endpoint at level $h-2$ and another endpoint at level $h-1$ in the left to right order. We subdivide every second one of such edges. This step will add $2^{h-3}$ vertices to the graph (see \cref{n4trees}). We refer to the tree obtained after these modifications, with $2^{h} + 2^{h-2} + 2^{h-3} - 3$ vertices, as $B'_h$. Let $\mathcal{S'}$ be the set of paths in $B'_h$ with the same endpoints as the paths in $\mathcal{S}$. To show that $\mathcal{S'}$ is a vertex-separating path system of $B'_h$, we only need to consider the separation of subdivision vertices (the separation from the root of $B'_h$ is a special case and will be handled separately). Every subdivision vertex $v$ is on two different paths in $\mathcal{S'}$ and is incident to two vertices of degree three, say $v_1$ and $v_2$. Therefore, by construction, there are three different paths going through each neighbor of $v$, while exactly two of them pass through $v$. To separate $v$ from other vertices in $V(B'_h)\setminus \{v_1,v_2\}$, we consider the cases where $u$ is a subdivision vertex, an internal vertex of $B_h$, or a leaf of $B'_h$. In the former two cases, we use the fact that there is an edge incident to $u$ that is separated by a path in $\mathcal{S}$ from the edge on which $v$ is located. For the latter case, we note that at least one of the two paths that go through the edge $\{v_1, v_2\}$ in $B_h$ does not cover the vertex $u$.

For each $0 \le \ell < 2^{h-1}$ where $\ell$ mod $4 = 2$, according to our construction, there exists a path of length two between the leaf labeled $\ell$ and $\ell+1$. As the final step of the construction, we create the tree $T_h$ by adding an edge to each leaf located at $\ell+1$. We extend the path of length two between $\ell$ and $\ell+1$ in $\mathcal{S'}$ to cover the new edge (indicated by dashed edges in \cref{n4trees}). Once again, based on our construction, the extended paths of length three will separate the new vertex from the rest of the vertices. This final step adds $2^{h-3}$ vertices to obtain the final tree, resulting in a total of $3 \cdot 2^{h-1} - 3$ vertices in $T_h$. As mentioned before, the only vertex that is not separated from its two neighbors is the root of the binary tree, and this can be solved by adding two extra paths.

Now if $\mathcal{P}$ denotes our vertex-separating path system of $T_h$, then we have
\[ |\mathcal{P}| \le 3 \cdot 2^{h-3} + 2 < \frac{1}{4} \cdot |V(T_h)| + 3 = \frac{1}{4} \cdot |V(T_h)| + O(1).\]
\end{proof}

By combining the previous two propositions, one can establish the proof of \cref{trees_label}.

\section{Grid Graphs} \label{grid_sec}

In this section, we study vertex-separating path systems of grids. Vertex-separating cycle systems of the $m\times n$ torus have been studied in \cite{DBLP:journals/dam/HonkalaKL03} and \cite{DBLP:journals/combinatorics/Rosendahl03}. Here, we separate the vertices of an $m\times n$ grid, $G_{m,n}$, using paths. We start with a simple observation about the existence of a Hamiltonian path in grid graphs.

\begin{obs} \label{ham_grid}
    Let $m$ and $n$ be two positive integers. The grid $G_{m,n}$ has a Hamiltonian path with both endpoints on the last row.
\end{obs}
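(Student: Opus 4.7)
The plan is to give an explicit construction by case analysis on the parity of $m$. The case $n = 1$ is trivial, since $G_{m,1}$ is itself a path whose vertices all lie on the last (and only) row; I can therefore assume $m \ge 2$ and $n \ge 2$.

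If $m$ is even, I would use the standard column-by-column snake. Starting at $(0, n-1)$, traverse column $0$ downward to $(0, 0)$, cross to $(1, 0)$, traverse column $1$ upward to $(1, n-1)$, cross to $(2, n-1)$, and so on, alternating directions in consecutive columns. A straightforward parity count shows that the path exits column $k$ at row $0$ when $k$ is even and at row $n-1$ when $k$ is odd, so with $m - 1$ odd the final column terminates at $(m-1, n-1)$; both endpoints $(0, n-1)$ and $(m-1, n-1)$ lie on the last row.

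If $m$ is odd (so $m \ge 3$), I would first run the column snake only through columns $0, 1, \ldots, m-3$. Since $m - 3$ is even, the same parity count says this partial snake ends at $(m-3, 0)$. I then cross to $(m-2, 0)$ and zig-zag row-by-row through the $2 \times n$ strip formed by columns $m-2$ and $m-1$, following the pattern $(m-2, 0) \to (m-1, 0) \to (m-1, 1) \to (m-2, 1) \to (m-2, 2) \to \cdots$. A second parity check shows this zig-zag terminates at $(m-1, n-1)$ when $n$ is odd and at $(m-2, n-1)$ when $n$ is even; in either subcase the terminal vertex lies on the last row, and $(0, n-1)$ is again the other endpoint.

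The only genuine obstacle is the case $m$ odd, where a single global column snake would end at $(m-1, 0)$ rather than on the last row; the fix is precisely to strip off the last two columns and finish with the $2 \times n$ row-wise zig-zag described above. Once the case split is in place, verifying that the pieces concatenate into a Hamiltonian path, and that the stated parity counts deliver endpoints on row $n-1$, is a purely mechanical check that I would carry out by tracking the exit row of each column (or row strip) traversal.
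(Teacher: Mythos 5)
The paper states \cref{ham_grid} without proof, so there is nothing to compare against; on its own merits your construction is the standard boustrophedon argument and the two parity checks you describe are correct: for $m$ even the column snake ends at $(m-1,n-1)$, and for $m$ odd the truncated snake ends at $(m-3,0)$ and the row-wise zig-zag through the last two columns ends at $(m-1,n-1)$ or $(m-2,n-1)$ according to the parity of $n$. The one logical slip is the reduction at the start: disposing of $n=1$ entitles you to assume $n\ge 2$, but not $m\ge 2$, so the case $m=1$, $n\ge 2$ is left unhandled. In that case the claim as literally stated is in fact false: $G_{1,n}$ is a path whose last row $\{0\}\times\{n-1\}$ is a single vertex, and its unique Hamiltonian path has its other endpoint at $(0,0)$. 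So the observation implicitly requires $m\ge 2$ (equivalently, a last row with at least two vertices), which is consistent with how it is invoked in the proof of \cref{grid}, where each component spans at least two consecutive columns. You should either add the hypothesis $m\ge 2$ or note explicitly that the degenerate one-column case is excluded; with that caveat the proof is complete.
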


Now we state the main theorem of this section regarding the value of $f(G_{m,n})$ when $m, n \ge 2$.

\begin{thm} \label{grid}
Let $m, n \ge 2$ be two integers. Then $f(G_{m,n}) \le 2\lceil \log{m} \rceil + 2\lceil \log{n} \rceil$.
\end{thm}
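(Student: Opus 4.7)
The plan is to construct a vertex-separating path system of $G_{m,n}$ by combining two families: $2\lceil \log m\rceil$ \emph{column-separating} paths, which will separate any pair of vertices whose columns differ, together with a symmetric family of $2\lceil \log n\rceil$ \emph{row-separating} paths. Since every pair of distinct vertices of $G_{m,n}$ disagrees in column or in row, these two families together will separate all pairs.

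For each bit $b \in \{0,1,\dots,\lceil\log m\rceil-1\}$ of the column index I will construct two paths $P^c_{b,0}, P^c_{b,1}$ that jointly separate every pair whose columns disagree at bit $b$. The idea is to partition the columns into maximal blocks of $2^b$ consecutive columns, so that the $k$-th block has bit $b$ equal to $k \bmod 2$; call the odd-indexed blocks \emph{teeth} and the even-indexed blocks \emph{gap blocks}. The path $P^c_{b,0}$ will be a Hamiltonian path of the subgraph formed by all tooth vertices together with the row-$(n-1)$ vertices of each gap block lying \emph{between} two consecutive teeth. Such a path is obtained by a ``snake'' that enters and exits each tooth along row $n-1$, using the row-$(n-1)$ segment of the intervening gap block as a bridge. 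For $b \ge 1$ each tooth is a $2^b \times n$ sub-grid and a standard boustrophedon pattern yields a Hamiltonian path between the corners $(c,n-1)$ and $(c+2^b-1,n-1)$; for $b = 0$ each tooth is a single column, so the snake instead alternates between row $0$ and row $n-1$ as bridge rows. The path $P^c_{b,1}$ is defined analogously with the roles of rows $0$ and $n-1$ swapped. When $b$ is large enough that only one tooth exists, the two paths are simply two distinct Hamiltonian paths of that tooth.

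To verify separation, fix any pair $(i,j),(i',j')$ with $i \ne i'$ and pick a bit $b$ where the binary representations of $i,i'$ differ; assume without loss of generality that the $b$-th bit of $i'$ is $1$ (so $i'$ lies in a tooth) and the $b$-th bit of $i$ is $0$ (so $i$ lies in a gap block). Then $(i',j')$ lies on both $P^c_{b,0}$ and $P^c_{b,1}$, whereas $(i,j)$ lies on $P^c_{b,\epsilon}$ only when $j$ equals the bridge row used by $P^c_{b,\epsilon}$ at $i$'s gap block. Since the two paths use distinct bridge rows at every bridged gap block and since $n \ge 2$, at most one of the two paths can contain $(i,j)$, so the other separates the pair. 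Row separation is entirely symmetric, using columns $0$ and $m-1$ as bridges through the corresponding ``gap blocks of rows''.

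The main technical task will be verifying that the snake subgraphs admit the required Hamiltonian paths with the prescribed corner endpoints for all values of $b$ and all $m, n \ge 2$; the interesting edge cases are when $m$ is not a power of two and the final block is shorter than $2^b$, but the partial block remains a rectangular sub-grid and is handled identically. This should reduce to a routine case analysis on the parities of $2^b$ and $n$, combined with snake constructions on rectangular sub-grids in the spirit of Observation~\ref{ham_grid}.
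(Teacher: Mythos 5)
Your proposal is correct and follows essentially the same route as the paper: encode the column (resp.\ row) index in binary and, for each bit, build boustrophedon paths that span exactly the columns whose bit is $1$, using the two extreme rows as bridges through the remaining columns, with two complementary paths per bit so that any bridge-row vertex of a ``gap'' column is missed by at least one of them. The only differences are implementation details --- the paper realizes the two paths per bit by restricting attention to the first and to the last $m-1$ rows respectively, and it handles the least significant bit with the special set $\{(x,y) : y \bmod 4 \in \{1,2\}\}$ instead of your alternating bridge rows for single-column teeth.
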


\begin{proof}
    Let $m, n \ge 2$ be two integers. Recall that we represent the vertex set of $G_{m,n}$ by $V(G_{m,n}) = \{0,1,2, \ldots, m-1\}\times\{0,1,2, \ldots, n-1\}$. To begin with, we separate the vertices within different columns of the subgrid induced by the first $m-1$ rows of $G_{m,n}$ from each other by a separating path system of size $O(\log{n})$. With a similar idea as in \cref{kn}, we find a nice labeling of the columns. For each $2 \le i \le \lceil \log{n} \rceil$, let $A_i = \{(x, y) \in V(G_{m,n})$ $|$ $x < m-1$ and the $i$-th bit in the binary representation of $y$ is $1\}$. We construct a spanning path of $A_i$, say $\Pi_i$, such that $\Pi_i$ does not intersect the first $m-1$ rows of $G_{m,n}$ on vertices other than vertices in $A_i$.
    
    Note that, for each $2 \le i \le \lceil \log{n} \rceil$, each component of $G[A_i]$ consists of at least two consecutive columns of $G$. By \cref{ham_grid}, we consider a Hamiltonian path with endpoints on the $(m-1)$-th row for each component of $G[A_i]$. We can merge the Hamiltonian path of these sub-grids using the $m$-th row. Therefore, $\Pi_i$ as a spanning path of $A_i$ exists (e.g. path $\Pi_3$ is depicted in \cref{grid_fig}). By construction, this set of paths will separate the vertices of every pair of columns, except for $\lfloor \frac{n}{2} \rfloor$ pairs of columns, namely $\{0, 1\}, \{2, 3\}, \{4, 5\}, \ldots, \{2\lceil \frac{n}{2} \rceil - 2, 2\lceil \frac{n}{2} \rceil - 1\}$.
    
    For the purpose of separating these pairs of columns, we introduce the set $A_1 = \{(x, y) \in V(G_{m,n}) | x < m-1$ and $y$ mod $4 = 1 $ or $2\}$. We then construct a spanning path $\Pi_1$ for the vertices of $A_1$ following the same procedure as described previously for $\Pi_2, \ldots, \Pi_{\lceil \log{n} \rceil}$ (refer to \cref{grid_fig}). Overall, the paths $\Pi_1, \Pi_2, \Pi_3, \ldots, \Pi_{\lceil \log{n} \rceil}$ separate every pair of vertices located in the first $m-1$ rows and in different columns. By repeating the same idea for the last $m-1$ rows, we can guarantee that with $2\lceil \log{n} \rceil$ paths, every pair of vertices $u, v \in V(G_{m,n})$ is separated if $\pi_y(u) \neq \pi_y(v)$.

    \begin{figure}[t]
    \centering
    \includegraphics[width=0.97\textwidth]{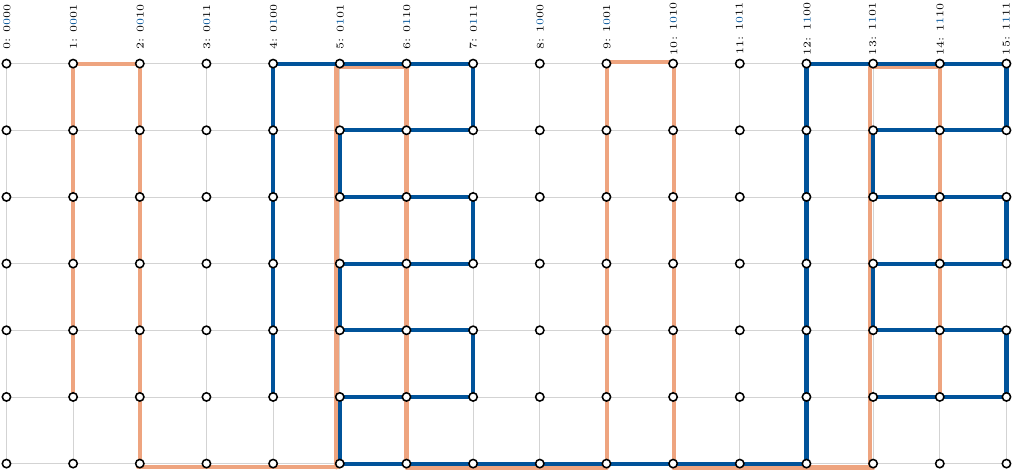} 
    \caption{Path $\Pi_3$ in blue colour and path $\Pi_1$ in orange colour in a $7\times16$ grid}
    \label{grid_fig}
    \end{figure}

    Using an analogous construction, a set of paths with a size of $2\lceil \log{m} \rceil$ would separate vertices located in different rows. Since every pair of vertices in $V(G_{m,n})$ differs in their $\pi_x$ or/and $\pi_y$ values, these paths will form a vertex-separating path system of size at most $2\lceil \log{m} \rceil + 2\lceil \log{n} \rceil$.
\end{proof}

\section{Maximal Outerplanar Graphs} \label{outer}

In this section, we explore vertex-separating path systems of maximal outerplanar graphs, i.e., graphs that are triangulations of convex polygons. The \defin{inner dual} of a maximal outerplanar graph is its dual where the vertex corresponding to the outer face is removed. \cref{grid} implies that $f(G_{2,n}) = O(\log{n})$. Hence, unlike trees, there are outerplanar graphs with a vertex-separating path system of size $O(\log{n})$. The class of trees is a subclass of outerplanar graphs. Therefore, we cannot hope to improve the upper bound for the class of outerplanar graphs in general. However, since each 2-connected outerplanar graph $G$ has a Hamiltonian cycle, using \cref{cycle}, one would notice that $f(G) \le \lceil \frac{n}{2} \rceil$. In fact, considering maximal outerplanar graphs leads to a further improvement in the upper bound. As a building block of maximal outerplanar graphs, we first consider the fan graph, $F_n$, on $n$ vertices, that is, a path $P_{n-1}$ on $n-1$ vertices and an apex vertex connected to all the vertices of the path. 

\begin{lem} \label{fan}
    Let $F_n$ be the fan graph with $n$ vertices. Then $f(F_n) = n/4 + O(1)$.
\end{lem}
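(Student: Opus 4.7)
The plan is to prove the matching bounds $f(F_n) \ge (n-2)/4$ and $f(F_n) \le n/4 + O(1)$. The lower bound is a short transition-counting argument, while the upper bound requires an explicit construction of a separating path system in $F_n$.

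For the lower bound, observe that any path in $F_n$ visits the apex $a$ at most once, so its intersection with the spine $v_0, v_1, \ldots, v_{n-2}$ consists of at most two consecutive subpaths (equivalently, at most two intervals of spine indices). Thus the characteristic vector of each path, read along the natural spine order, has at most $4$ value-changes, which I will call \emph{transitions}. Given any vertex-separating path system $\mathcal{S}$ of size $K$, the combined transitions of all its paths number at most $4K$, and they partition the spine into at most $4K+1$ maximal runs of consecutive vertices that all share the same fingerprint (i.e., the same membership pattern across $\mathcal{S}$). Since $\mathcal{S}$ must separate every pair of spine vertices, every such run must be a singleton, so $n - 1 \le 4K + 1$ and hence $K \ge (n-2)/4$.

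For the upper bound, I would construct a separating path system of size $\lceil (n-1)/4 \rceil + c$ for an absolute constant $c$. Partition the spine into consecutive blocks $B_j = \{v_{4j}, \ldots, v_{4j+3}\}$ of four vertices, and associate one path per block of the form $v_a \cdots v_b - a - v_{c} \cdots v_d$ with disjoint spine intervals $[a,b], [c,d]$, so that the four transitions of each block-path are placed in a staggered way across two (usually consecutive) blocks. The plan is to choose the staggering so that the union of transitions of all $\lceil (n-1)/4 \rceil$ block-paths hits every gap between consecutive spine vertices, forcing adjacent fingerprints to differ. A constant number of auxiliary paths---for example, a single path spanning the first half of the spine, and a short path exposing $a$ as an endpoint---are then added to break long-range symmetries among non-adjacent spine vertices and to separate the apex from every spine vertex.

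The main obstacle, and where I expect to spend most of the effort, is that having a transition in every gap is \emph{necessary} but not \emph{sufficient} for the fingerprint map to be injective: two non-adjacent spine vertices can easily fall into the ``zero region'' of every one of the $K$ paths, as a naive construction (putting the four transitions of the $k$-th path at gaps $\{4k-3, 4k-2, 4k-1, 4k\}$) exhibits. Overcoming this requires interlocking the two spine intervals of each block-path with those of its neighbors so that different blocks get marked by different combinations of paths; the handful of auxiliary paths then finish off the residual coincidences within each block. The technical core will be a case analysis on the relative positions of a pair $v_i, v_j$ (same block, adjacent blocks, far apart) that verifies the constructed $\lceil (n-1)/4 \rceil + O(1)$ paths separate every pair of vertices of $F_n$.
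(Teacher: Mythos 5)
Your lower bound is correct and complete: the observation that every path in $F_n$ meets the spine in at most two intervals, hence contributes at most four fingerprint transitions along the spine order, cleanly forces $n-1 \le 4K+1$ and so $K \ge (n-2)/4$. This is a genuinely different route from the paper's, which instead deletes the apex from each path to turn a separating system of $F_n$ into a separating path system of $P_{n-1}$ of at most twice the size and then invokes $f(P_{n-1}) = \lceil (n-1)/2 \rceil$; your version is more self-contained, while the paper's is shorter given the known value of $f(P_{n-1})$. Either suffices for the stated $n/4+O(1)$ bound.

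The upper bound, however, has a genuine gap: the construction is never actually specified. The phrases ``choose the staggering so that the union of transitions hits every gap'' and ``interlocking the two spine intervals of each block-path with those of its neighbors so that different blocks get marked by different combinations of paths'' describe the property you need, not a way to achieve it, and you yourself note that the obvious placement fails because non-adjacent spine vertices can share the all-zero fingerprint. Since the entire content of the upper bound is exhibiting $\approx n/4$ two-interval paths whose fingerprint map is injective, deferring this to an unwritten case analysis leaves the proof incomplete. The paper resolves exactly this difficulty with a cleaner organization of the two intervals per path: split the spine $P_{n-1}$ into halves $\Pi_l$ and $\Pi_r$, take optimal separating path systems $\mathcal{P}_l$ and $\mathcal{P}_r$ of the two half-paths (each of size at most $\lceil (n+2)/4 \rceil$ by $f(P_m)=\lceil m/2\rceil$, each element a single spine interval), merge the $i$-th path of $\mathcal{P}_l$ with the $i$-th path of $\mathcal{P}_r$ through the apex, and add the single path $\Pi_l$ to separate the two halves from each other; injectivity of the fingerprints within each half is then inherited from the known result for paths rather than re-proved by hand, and the few residual checks (e.g., the apex against spine vertices) are absorbed in the $O(1)$. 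If you want to keep your block-based framework, you would need to write down the interval endpoints explicitly and carry out the separation check; as it stands, the ``technical core'' you flag is precisely the part that is missing.
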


\begin{proof} 
    Let $\mathcal{P}$ be a vertex-separating path system of $F_n$. Using $\mathcal{P}$, we obtain a vertex-separating path system, $\mathcal{P}'$, for $P_{n-1}$. For each path $\Pi \in \mathcal{P}$, if $\Pi$ contains the apex vertex of $F_n$ then we create at most two paths for $\mathcal{P}'$ from $\Pi$ by removing the apex vertex, otherwise, we include $\Pi$ in $\mathcal{P}'$. Since $\mathcal{P}$ is a separating path system for $F_n$, $\mathcal{P}'$ is a separating path system for $P_{n-1}$. By \cref{cycle}, we know $\lceil\frac{n-1}{2} \rceil=f(P_{n-1}) \le 2\cdot f(F_n)$, therefore, $f(F_n) \ge \frac{n-1}{4}$.

    To prove the upper bound, we construct a separating path system of size $\frac{n + 10}{4}$ when $n \ge 6$. Let $\Pi$ be the induced path of size $n-1$ in $F_n$. We split $\Pi$ into sub-paths $\Pi_l$ and $\Pi_r$ of sizes $\left\lfloor \frac{n-1}{2} \right\rfloor$ and $\left\lceil \frac{n-1}{2} \right\rceil$, respectively. In order to separate the vertices of $\Pi_l$ and $\Pi_r$, we include the path $\Pi_l$ in the separating path system. Again by \cref{cycle}, we construct a separating path systems $\mathcal{P}_l$ and $\mathcal{P}_r$ of sizes at most $\left\lceil \frac{|\Pi_r|}{2} \right\rceil \le \frac{n+2}{4}$ for $\Pi_l$ and $\Pi_r$, respectively. Since every pair of vertices $u \in \Pi_l$ and $v \in \Pi_r$ are already separated, we can merge the paths of $\mathcal{P}_l$ and $\mathcal{P}_r$ through the apex of $F_n$. Moreover, we add a single vertex path on the apex to separate the apex form the rest of the vertices in $V(F_n)$. By construction and the fact that $n \ge 6$, observe that this set of paths is a vertex-separating path system for $F_n$. Hence, $f(F_n) \le \frac{n+2}{4} + 2 \le \frac{n + 10}{4}$.
\end{proof}

In \cref{poly-lowerbound}, we  confirm that, indeed, having just one high-degree vertex in an outerplanar graph is sufficient to establish a polynomial lower bound for the size of a vertex-separating path system.
To extend the result of \cref{fan} to all maximal outerplanar graphs, we make the following observation.

\begin{obs} \label{fan_dec}
    Let $G$ be a maximal outerplanar graph such that the inner dual of $G$ is a path. Then $G$ can be decomposed into maximal induced fan subgraphs, $F_1, F_2, F_3, \ldots, F_k$, such that $G = \bigcup^{k}_{i=1} F_i$ and $|V(F_i) \cap V(F_j)| \le 2$ for $1 \le i < j \le k$.
\end{obs}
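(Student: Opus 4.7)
The plan is to peel off fans one at a time along the dual path, using a greedy rule that extends each fan as far as possible before starting the next one. Label the triangles in dual-path order as $T_1, T_2, \ldots, T_{n-2}$, and write $e_i := V(T_i) \cap V(T_{i+1})$ for the edge shared by two consecutive triangles. Before constructing anything, I would establish the following structural lemma: for every vertex $v \in V(G)$, the set of indices $\{i : v \in V(T_i)\}$ is a contiguous interval of the dual path. The argument is planar: the triangles incident to $v$ form a geometric fan around $v$ in the planar embedding (since $v$ lies on the outer face), any two such triangles sharing an edge are dual-adjacent, so the triangles containing $v$ form a connected subgraph of the dual path, i.e., a subpath.

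With this lemma in hand, I construct the fans inductively. Set $s_0 := 0$; having fixed $s_{\ell-1}$, choose $v_\ell \in V(T_{s_{\ell-1}+1})$ to maximize the largest integer $s_\ell > s_{\ell-1}$ such that $v_\ell \in V(T_j)$ for every $s_{\ell-1} < j \le s_\ell$, and set $F_\ell := G\bigl[V(T_{s_{\ell-1}+1}) \cup \cdots \cup V(T_{s_\ell})\bigr]$. Continue until every triangle has been absorbed. By construction, all triangles of $F_\ell$ contain $v_\ell$, so $F_\ell$ is naturally a fan with apex $v_\ell$ whose non-apex vertices form a path $u_0, u_1, \ldots, u_{s_\ell - s_{\ell-1}}$ along the outer boundary of the $F_\ell$-region, with $T_{s_{\ell-1}+i} = \{v_\ell, u_{i-1}, u_i\}$.

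Three properties remain to be verified. For the intersection bound, if $v \in V(F_i) \cap V(F_j)$ with $i<j$, pick $T_a \in F_i$ and $T_b \in F_j$ both containing $v$; contiguity forces $v \in V(T_c)$ for every $a \le c \le b$, and in particular $v \in V(T_{s_i}) \cap V(T_{s_i+1}) = e_{s_i}$, so $|V(F_i) \cap V(F_j)| \le |e_{s_i}| = 2$. For inducedness, a would-be chord $u_i u_j$ with $|i-j| \ge 2$ inside $F_\ell$ would require a triangle containing both; but by the contiguity lemma each interior $u_i$ lies only in the two $F_\ell$-triangles flanking it, and the endpoint vertices $u_0, u_{s_\ell - s_{\ell-1}}$ can extend their triangle ranges only into a neighboring fan, so no triangle simultaneously contains two non-consecutive $u_i$'s. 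Maximality of each $F_\ell$ follows from the greedy choice: $s_\ell$ is as large as possible, so the fan cannot be extended on its right with apex $v_\ell$ (the next triangle $T_{s_\ell+1}$ does not contain $v_\ell$), the apex itself was selected optimally among $V(T_{s_{\ell-1}+1})$, and any attempt to absorb further triangles on the left would overlap the previous fan in more than two vertices.

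The hardest step will be the inducedness verification --- certifying that no hidden chord exists among the non-apex path vertices of $F_\ell$. This is precisely where the path-dual hypothesis carries its weight: the contiguity of triangle ranges, which is available because the dual is a path (and fails for a dual with branches), is exactly the structural fact that rules out a ``side'' triangle attaching a chord between two non-consecutive $u_i$'s. Once inducedness is in place, the remaining verifications are essentially bookkeeping consequences of the contiguity lemma.
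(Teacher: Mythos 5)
The paper states this as an \emph{Observation} and supplies no proof at all, so there is no in-paper argument to compare yours against; I am judging the proposal on its own. Most of it is sound: the contiguity lemma (the triangles containing a fixed vertex form a contiguous subpath of the dual path, because they form a connected fan around that vertex) is correct and is the right structural tool, and the greedy peeling does produce induced fans whose union is $G$ and whose pairwise vertex intersections lie inside a single shared edge $e_{s_i}$, hence have size at most $2$. Your inducedness check via the triangle ranges of interior and endpoint path vertices also goes through.

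The genuine gap is maximality. Your argument only shows that $F_\ell$ cannot be extended rightwards with the \emph{same} apex, and that absorbing more triangles would break the intersection bound; neither implies that $F_\ell$ is a maximal induced fan subgraph of $G$, which is an absolute property of $F_\ell$ in $G$, not a property relative to the rest of the decomposition. Concretely, take the hexagon $1,\dots,6$ triangulated by the chords $26$, $36$, $35$, whose inner dual is the path $T_1=\{1,2,6\}$, $T_2=\{2,3,6\}$, $T_3=\{3,5,6\}$, $T_4=\{3,4,5\}$. Your greedy rule picks apex $6$ in $T_1$ and outputs $F_1=G[\{1,2,3,5,6\}]$ and then $F_2=G[\{3,4,5\}]$; but this last triangle is properly contained in the induced fan $G[\{3,4,5,6\}]$ (apex $3$, path $4,5,6$), so it is not maximal. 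Worse, the literal statement cannot be rescued on this example: the only maximal induced fan subgraphs of this graph are $G[\{1,2,3,5,6\}]$ (apex $6$) and $G[\{2,3,4,5,6\}]$ (apex $3$); each is the unique maximal fan containing the degree-two vertex $1$, respectively $4$, so any covering by maximal fans must use both, and they share the four vertices $\{2,3,5,6\}$. So ``maximal'' can only be meant in the weaker sense of a maximal run of consecutive triangles through a common apex --- which is what your construction delivers and what the proof of \cref{inner2} actually consumes (note, though, that \cref{inner2} additionally wants each $\mathcal{V}_i$ to be an arc of the outer face, whereas your $F_1$ above has non-apex path $1,2,3,5$ using the chord $35$). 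I would state and prove that weaker property; as written, the maximality step is not just under-argued but unprovable.
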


\begin{lem} \label{inner2}
    Let $G$ be an $n$-vertex maximal outerplanar graph such that the inner dual of $G$ is a path, then $f(G) \le \frac{n}{4} + O(1)$.
\end{lem}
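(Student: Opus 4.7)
The plan is to leverage the fan decomposition from Observation \ref{fan_dec}, writing $G = \bigcup_{i=1}^{k} F_i$ with consecutive fans sharing at most two vertices. The case $k=1$ is exactly Lemma \ref{fan}, so assume $k \ge 2$. The fans are naturally ordered along the inner dual path, with the apex of $F_i$ typically being a non-apex boundary vertex of a neighbouring fan. The strategy is to build a separating path system by combining the per-fan systems of Lemma \ref{fan} with careful gluing along this linear structure.

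First I would apply Lemma \ref{fan} to each $F_i$ to obtain a separating path system $\mathcal{P}_i$ of $F_i$ of size at most $(|V(F_i)|+6)/4$, using the apex of $F_i$ as pivot. Since the paths produced by that construction have endpoints on the induced path of $F_i$ rather than at the apex, they can be chained across the shared vertices of consecutive fans: whenever a path of $\mathcal{P}_i$ ends at a vertex $v$ shared with $F_{i+1}$, I would concatenate it with a path of $\mathcal{P}_{i+1}$ starting at $v$, producing a single path in $G$ in place of two.

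Verification of separation then splits into two cases. For a pair of vertices in a common fan $F_i$, the restriction of the combined paths to $F_i$ reproduces $\mathcal{P}_i$, so separation follows from the properties of the fan system. For vertices in distinct fans $F_i$ and $F_j$, I would argue via coverage along the dual path: ensuring that for every $i$, the final system contains at least one path whose vertex set is contained in a bounded window of consecutive fans around $F_i$, so that two vertices separated by many fans are automatically distinguished.

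The main obstacle is the final count. A naive sum yields $\sum_i |\mathcal{P}_i| \le n/4 + O(k)$ because of the additive overhead in Lemma \ref{fan}, and elementary gluing saves only $k-1$ paths, which is not enough to absorb the $O(k)$ term. To close this gap I would strengthen Lemma \ref{fan} with endpoint control, requiring that the constant-sized ``extra'' paths of each $\mathcal{P}_i$ (the splitting path $\Pi_l$ and any overhead paths) have a prescribed endpoint at the vertex shared with the next fan along the dual path. Then, at each junction, the $O(1)$ overhead paths of $\mathcal{P}_i$ merge with those of $\mathcal{P}_{i+1}$ rather than accumulating, and the additive overhead amortizes to $O(1)$ total instead of $O(1)$ per fan. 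The technical bulk of the proof will be verifying that this endpoint-constrained strengthening of Lemma \ref{fan} still uses $|V(F_i)|/4 + O(1)$ paths and that the chained system continues to separate all vertex pairs in $G$.
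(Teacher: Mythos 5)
You correctly identify the central obstacle: summing Lemma~\ref{fan} over the $k$ fans of Observation~\ref{fan_dec} gives $n/4+O(k)$, and $k$ can be $\Theta(n)$, so the per-fan additive overhead must somehow be amortized away. But the repair you propose --- chaining the $O(1)$ ``overhead'' paths of consecutive fans through their shared vertices --- faces a concrete obstruction that the proposal does not resolve. The splitting path of $F_i$ is (one half of) the base path of $F_i$; its job is to contain the left half of $\mathcal{V}_i$ and avoid the right half. To continue such a path into $F_{i+1}$ you must reach the vertex $v_{i,n_i}=v_{i+1,1}$ shared with $F_{i+1}$, and from the midpoint of $F_i$'s base path the only routes there run either through the other half of $\mathcal{V}_i$ (destroying exactly the separation this path was providing) or through the apex $a_i$, which is itself a base vertex of a different fan and whose inclusion perturbs separation there. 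One can chain two consecutive fans by taking the right half of one and the left half of the next, but the chain cannot be extended further without swallowing a whole base path, so the savings are a constant factor on $k$, not a reduction to $O(1)$. The cross-fan separation argument (``bounded windows'') is likewise only sketched, and it is not clear which path separates two vertices lying in far-apart fans. As written, the proposal is a plan whose load-bearing step is unverified and, in the form stated, appears to fail.

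The paper avoids per-fan accounting altogether. Since the inner dual is a path, $G$ has exactly two degree-two vertices $\alpha,\beta$, which split the outer cycle into two arcs; every apex lies on one of the two arcs, and grouping the fans by which arc carries their apex partitions $V(G)$ into two sets $V_0,V_1$, each inducing a single path on the outer face. Contracting $G[V_{1-i}]$ to a point turns $G$ into one big fan with base path $V_i$, so Lemma~\ref{fan} is invoked only twice (once per side), plus one extra path to separate $V_0$ from $V_1$. The total is $\frac{|V_0|}{4}+\frac{|V_1|}{4}+O(1)=\frac{n}{4}+O(1)$, with the additive constant paid once rather than once per fan. If you want to rescue your line of attack, the lesson is that the amortization should happen at the level of the decomposition (two groups instead of $k$), not at the level of stitching together $k$ separate path systems.
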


\begin{proof}
    Let $\mathcal{F} := \{F_1, F_2, F_3, \ldots, F_k\}$ be the set of all maximal fan subgraphs of $G$, as in \cref{fan_dec}. Assume $k > 1$, otherwise, the result is implied from \cref{fan}. For each $1 \le i \le k$, the vertex set of the graph $F_i$ consists of an apex vertex $a_i$ and vertices, $\mathcal{V}_i := \{v_{i,1}, v_{i,2}, v_{i, 3}, \ldots, v_{i, n_i}\}$, along an induced path on the outer face of $G$ in clockwise order. Note that $2 \le n_i = |V(F_i)|-1$ and $a_i$ is the unique common neighbour of the vertices in $\mathcal{V}_i$. Moreover, if $1 \le i < k$, $v_{i, n_i} = v_{(i+1), 1}$.

    \begin{figure} [t]
        \begin{center}
          \includegraphics[scale=0.66]{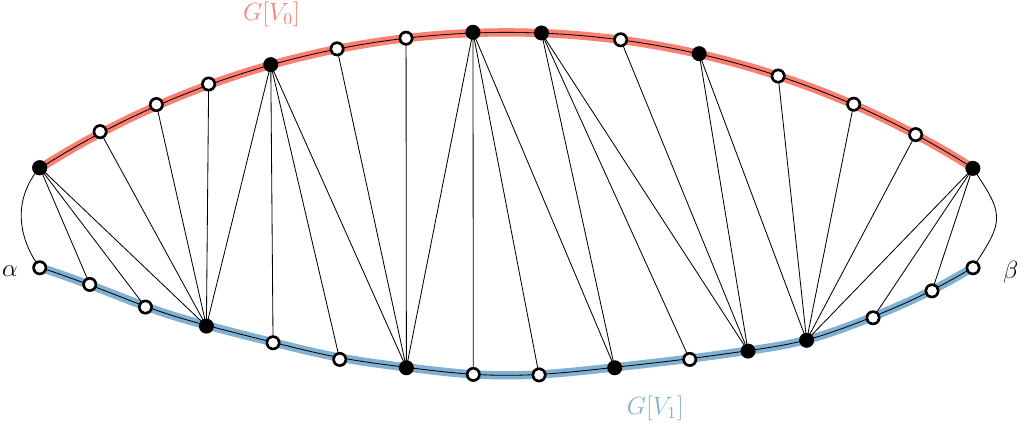}
        \end{center}
        \caption{An outerplanar graph in which the inner dual is a path. The black vertices represent the apex vertices of maximal fan subgraphs.}
        \label{pathinner}
    \end{figure}

    Since the inner dual of $G$ is a path, $G$ contains exactly two vertices of degree two, say $\alpha$ and $\beta$. There are two different paths, $P_{\alpha\beta}$ and $P_{\beta\alpha}$, between $\alpha$ and $\beta$ on the outer face of $G$. None of the vertices $\alpha$ and $\beta$ is an apex vertex for any graph in $\mathcal{F}$ (See \cref{pathinner}). We partition the set $\mathcal{F}$ into two sets of almost equal size $\mathcal{F}_0 = \{F_i \in \mathcal{F} | a_i \in P_{\alpha\beta}\}$ and $\mathcal{F}_1 = \{F_i \in \mathcal{F} | a_i \in P_{\beta\alpha}\}$ and define $V_0 := \bigcup_{F_i \in \mathcal{F}_0} \mathcal{V}_i$ and $V_1 := \bigcup_{F_i \in \mathcal{F}_1} \mathcal{V}_i$. $V_0$ and $V_1$ create a partition of $V(G)$ and $G[V_0]$ and $G[V_1]$ are two disjoint paths on the outer face of $G$. 

    For each $i \in \{0, 1\}$, let $G_i$ be the fan graph obtained by contracting the connected subgraph $G[V_i]$ of $G$ into a single vertex. We separate the vertices within $V_{i}$ from each other by applying the result of \cref{fan} to $G_{1-i}$, where the subgraph $G[V_{1-i}]$ plays the role of the apex vertex in the construction explained in \cref{fan}. To separate the vertices of $V_0$ and $V_1$ from each other, we only need to consider one extra path, namely $G[V_0]$ or $G[V_1]$\footnote{In the proof of \cref{outerthm}, we include both $G[V_0]$ and $G[V_1]$ in the vertex-separating path system. This guarantees that the resulting separating system covers $V(G)$.}. So the total number of paths used to separate $V(G)$ is at most $f(G_0) + f(G_1) + 1 \le \frac{n}{4} + O(1)$.
\end{proof}

\begin{thm} \label{outerthm}
    Let $G$ be an $n$-vertex maximal outerplanar graph for $n > 3$. Then $f(G) \le \frac{n}{4} + O(1)$. Moreover, this upper bound is tight up to an additive constant.
\end{thm}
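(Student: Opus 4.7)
The plan is to extend \cref{inner2}---which handles the case when the inner dual of $G$ is a path---to arbitrary maximal outerplanar graphs, whose inner duals are general trees $T$. I would proceed by induction on the number of degree-$3$ (branching) vertices of $T$: the base case, when $T$ is a path, is exactly \cref{inner2}.

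For the inductive step, I would pick a branching vertex $v \in T$ whose corresponding triangle $xyz$ splits $G$ into three sub-polygons $P_x, P_y, P_z$ that pairwise share one of the vertices $x,y,z$. I would choose $v$ near a leaf of $T$, so that at least one of the three subtrees hanging off $v$ is a path; the associated sub-polygon then has a path as its inner dual and can be handled directly by \cref{inner2}. The remaining sub-polygons have strictly fewer branching vertices in their inner duals than $T$, so the inductive hypothesis yields a vertex-separating path system of size $|V(P_\alpha)|/4 + O(1)$ on each.

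To combine the sub-systems into a vertex-separating path system for $G$, I would invoke the strengthening of \cref{inner2} flagged in the footnote of its proof: each sub-system can be chosen to cover the vertex set of its sub-polygon. This covering property automatically separates any vertex in one sub-polygon from any vertex lying only in another---if $u \in V(P_x)$ is covered by some path of the $P_x$-system and $w \in V(P_y)\setminus V(P_x)$, then that path contains $u$ but not $w$. Only a bounded number of additional paths are therefore needed to separate the three shared vertices $x, y, z$ and to handle the few vertices that any one sub-system may fail to cover.

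The main obstacle will be preventing the additive $O(1)$ term from blowing up with the recursion depth, since a naive union argument doubles the constant at each level. I plan to address this by choosing the branching vertex $v$ so that the split peels off a sub-polygon whose vertex count dominates the $O(1)$ extra paths it introduces---that is, so that the $|V(P_\alpha)|/4$ saved always pays for the constant number of gluing paths. With this accounting, a single universal constant $c$ in the bound $f(G) \le n/4 + c$ can be fixed once and for all across the induction. The tightness of the bound is immediate from \cref{fan}: any fan $F_n$ is maximal outerplanar and satisfies $f(F_n) \ge (n-1)/4$.
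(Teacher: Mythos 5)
There is a genuine gap in your accounting, and it is exactly at the point you flag as the ``main obstacle.'' Your recursion splits $G$ at branching triangles of the inner dual, and each split costs an additive constant: the peeled-off piece $P_\alpha$ is handled by \cref{inner2}, which gives $|V(P_\alpha)|/4 + c_0$ paths, not fewer --- and this bound is tight, since a fan already forces $(n-1)/4$ paths by \cref{fan}. So there is no ``$|V(P_\alpha)|/4$ saved'' to pay for the $c_0$ plus the gluing paths; that quantity is fully spent on separating $P_\alpha$ itself. Summing over a decomposition into $p$ path-dual pieces, the vertex counts of the pieces overlap only in the $O(1)$ shared triangle vertices per split, so the total is $\frac{n}{4} + \Theta(p)$. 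Since the inner dual of a maximal outerplanar graph can be (essentially) any tree of maximum degree three, the number of branching vertices --- and hence $p$ --- can be $\Theta(n)$, and your bound degrades to $\frac{n}{4} + \Theta(n)$. No choice of which branching vertex to split at first can fix this, because every piece of the final decomposition eventually pays its own constant.

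The paper closes this gap with an idea your proposal is missing: it only decomposes into path-dual pieces when the inner dual has a bounded number of leaves (at most $19$), so that $p = O(1)$. When the inner dual has $k \ge 20$ leaves, it instead strips off all $k$ degree-two vertices of $G$ at once and separates them from everything else using only $\lceil \log(k+1) \rceil$ paths, via a nice labeling in the sense of \cref{kn} (any subset of the ears can be covered by a single path that avoids the other ears, by extending a Hamiltonian path of the remaining maximal outerplanar graph $G'$). Those $k$ removed vertices contribute a budget of $k/4$ to the target bound, and $\lceil \log(k+1)\rceil \le k/4$ for $k \ge 20$, so this step generates genuine slack rather than consuming it; the induction then proceeds on $G'$ with a single universal constant. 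Your covering-based argument for separating vertices across different pieces is fine as far as it goes, and the tightness claim via \cref{fan} is correct, but without a step that separates many vertices with sublinearly many paths, the additive term cannot be kept at $O(1)$.
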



\begin{proof}
    Let $n$ be the number of vertices of $G$ and $k$ be the number of leaves of the inner dual of $G$. We prove this statement by induction on pair $(k, n)$. For the base case, note that the inner dual of each maximal outerplanar graph contains at least two leaves and \cref{inner2} proves the statement when the number of leaves is equal to two.
    
    For the inductive step, if the inner dual of $G$ has at most $k$ leaves, where $3 \le k \le 19$, then we decompose $G$ into constant number of maximal outerplanar graphs $G_1, G_2, G_3, \ldots, G_{k-1}$, such that $G = \bigcup^{k-1}_{i = 1} G_i$ and the inner dual of $G_i$ is a path and $|V(G_i) \cap V(G_j)| \le 1$ for $1 \le i < j < k$. For each $1 \le i < k$, let $\mathcal{P}_i$ be the separating path system obtained by applying \cref{inner2} to $G_i$. By construction, each $\mathcal{P}_i$ is a covering path system for $G_i$. Consider $\mathcal{P} = \bigcup^{k-1}_{i = 1} \mathcal{P}_i$ as a separating path system of $G$. Since $k$ is a constant, we have 
    \[f(G) \le \sum^{k-1}_{i = 1} f(G_i) \le \sum^{k-1}_{i = 1} \left(\frac{V(G_i)}{4} + O(1)\right) \le \frac{n}{4} + O(1).\]
    Now assume that the inner dual of $G$ has $k \ge 20$ leaves. We use at most $\lceil \log(k+1) \rceil$ paths to separate the degree two vertices of $G$ associated with these $k$ leaves of the dual. Let $S$ be the set of degree two vertices of $G$; thus, $|S| = k$. We are aiming to find a nice labeling for the vertices of $S$. The graph $G' = G[V(G)\setminus S]$ is a maximal outerplanar graph. We show that for each $A \subseteq S$, there exists a path in $G$ that covers all vertices of $A$ and no vertex of $S \setminus A$. Because no two vertices of degree two of $G$ are adjacent, every degree two vertex of $G$ is adjacent to two consecutive vertices on the outer face of $G'$. We construct such a path by considering a Hamiltonian path of $G'$ and extending it to contain only the vertices of $A$. Therefore, we can apply the result of \cref{kn}, to separate the vertices in $S$ from each other by a separating path system of size $\lceil \log(k+1) \rceil$ (the plus one in the log function is to ensure that the constructed separating path system covers $S$). Observe that $G'$ is a maximal outerplanar graph with a strictly smaller number of vertices and the number of leaves in its inner dual is no more than the number of leaves in the inner dual of $G$. We apply the induction hypothesis to the graph $G'$. Since $\frac{\lceil \log(k+1) \rceil}{k} \le \frac{1}{4}$ for $k \ge 20$, we have
    \[f(G) \le f(G') + \lceil \log(k+1) \rceil \le \frac{|V(G')|}{4} + O(1) + \lceil \log(|S|+1) \rceil\]
    \[ \le \frac{|V(G)|}{4} + O(1) = \frac{n}{4} + O(1).\]
    \cref{fan} proves the tightness of this upper bound up to an additive constant.
\end{proof}

\section{Graph Classes With Polynomial Lower Bound} \label{poly-lowerbound}

This section will demonstrate that for every positive integer $t$, $K_{2,t}$-minor-free graphs with a high-degree vertex require a polynomial-sized vertex-separating path system.

To begin with, we introduce the Vapnik–Chervonenkis (VC) dimension of a set system and the relevant concepts. Let $(X, \mathcal{S})$ be a set system. We say that a subset $A \subseteq X$ is \defin{shattered} by $\mathcal{S}$ if every subset of $A$ can be expressed as the intersection of some $B \in \mathcal{S}$ with $A$. We define the \defin{VC-dimension} of $(X, \mathcal{S})$ as the supremum of the sizes of all finite subsets of $X$ that can be shattered by $\mathcal{S}$. We define the \defin{shatter function} of a set system $(X, \mathcal{S})$ as:
\[\pi_\mathcal{S}(k) = \max_{Y \subseteq X, |Y| = k} |\{B \cap Y\ |\  B \in \mathcal{S}\}|.\]
In words, $\pi_\mathcal{S}(k)$ is the maximum possible number of distinct intersections of the sets of $\mathcal{S}$ with a $k$-element subset $Y$ of $X$. It is bounded by the following lemma:

\begin{lem}[\cite{VC71,Sau72,She72,matousek2013lectures}]\label{lm:sauer}
  Let $(X, \mathcal{S})$ be a set system with VC-dimension $d$ then $\pi_{{\cal S}}(k) \leq \Phi_d(k)$, where $\Phi_d(k) = \binom{k}{0}+\binom{k}{1}+ \dots +\binom{k}{d}$. In particular, for $k \geq d$ one has $\pi_{{\cal S}}(k) \leq (\frac{e} d)^d \cdot k^d$, where $e$ is Euler's number. 
\end{lem}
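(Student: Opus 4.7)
The plan is to split the statement into two parts: the combinatorial Sauer--Shelah inequality $\pi_{\mathcal{S}}(k) \le \Phi_d(k)$, and the numerical estimate $\Phi_d(k) \le (e/d)^d \cdot k^d$ for $k \ge d$. Fix an arbitrary $Y \subseteq X$ with $|Y| = k$ and let $\mathcal{F} := \{B \cap Y : B \in \mathcal{S}\} \subseteq 2^Y$. The VC-dimension hypothesis says that $\mathcal{F}$ shatters no subset of $Y$ of size larger than $d$, so the number $\mathrm{sh}(\mathcal{F})$ of subsets of $Y$ shattered by $\mathcal{F}$ is trivially at most $\sum_{i=0}^{d}\binom{k}{i} = \Phi_d(k)$. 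Therefore it suffices to prove Pajor's inequality $|\mathcal{F}| \le \mathrm{sh}(\mathcal{F})$ for any finite set system, as then $\pi_{\mathcal{S}}(k) = \max_Y |\mathcal{F}| \le \Phi_d(k)$.

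I would prove Pajor's inequality by induction on the ground-set size $|Y|$. The base case $Y = \emptyset$ is immediate. For the inductive step, fix $y \in Y$, write $Y' := Y \setminus \{y\}$, and set
\[ \mathcal{F}_{\mathrm{tr}} := \{S \cap Y' : S \in \mathcal{F}\}, \qquad \mathcal{F}_{\mathrm{db}} := \{S \subseteq Y' : S \in \mathcal{F} \text{ and } S \cup \{y\} \in \mathcal{F}\}. \]
A direct count gives $|\mathcal{F}| = |\mathcal{F}_{\mathrm{tr}}| + |\mathcal{F}_{\mathrm{db}}|$, since each $T \in \mathcal{F}_{\mathrm{tr}}$ is the projection of either one or two members of $\mathcal{F}$, and the latter is exactly when $T \in \mathcal{F}_{\mathrm{db}}$. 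The induction hypothesis on $Y'$ yields $\mathrm{sh}_{Y'}(\mathcal{F}_{\mathrm{tr}}) \ge |\mathcal{F}_{\mathrm{tr}}|$ and $\mathrm{sh}_{Y'}(\mathcal{F}_{\mathrm{db}}) \ge |\mathcal{F}_{\mathrm{db}}|$. The key combinatorial observation is then that every subset of $Y'$ shattered by $\mathcal{F}_{\mathrm{tr}}$ is shattered (as-is) by $\mathcal{F}$, while every $T \subseteq Y'$ shattered by $\mathcal{F}_{\mathrm{db}}$ gives $T \cup \{y\}$ shattered by $\mathcal{F}$ (use $S \in \mathcal{F}_{\mathrm{db}} \subseteq \mathcal{F}$ for subsets $U \subseteq T$, and $S \cup \{y\} \in \mathcal{F}$ for subsets $U \cup \{y\}$). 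These two families of shattered sets are disjoint (one excludes, the other contains $y$), so $\mathrm{sh}_Y(\mathcal{F}) \ge \mathrm{sh}_{Y'}(\mathcal{F}_{\mathrm{tr}}) + \mathrm{sh}_{Y'}(\mathcal{F}_{\mathrm{db}}) \ge |\mathcal{F}|$.

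Part (ii) is a routine calculation: for $k \ge d$,
\[ \Phi_d(k) = \sum_{i=0}^{d}\binom{k}{i} \le \left(\frac{k}{d}\right)^d \sum_{i=0}^{d}\binom{k}{i}\left(\frac{d}{k}\right)^i \le \left(\frac{k}{d}\right)^d \left(1+\frac{d}{k}\right)^k \le \left(\frac{ek}{d}\right)^d, \]
using $(k/d)^{d-i} \ge 1$ for $0 \le i \le d$, extending the partial binomial sum to the full $(1+d/k)^k$, and finally $(1+d/k)^k \le e^d$. I expect the only genuinely delicate step in the whole argument to be the shattering-preservation check in the inductive step for $\mathcal{F}_{\mathrm{db}}$; once that case-analysis on $y \in U$ vs.\ $y \notin U$ is pinned down, the rest is bookkeeping plus the standard binomial estimate.
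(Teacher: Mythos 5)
Your proof is correct. Note that the paper does not prove this lemma at all --- it is quoted as a known result with citations to Vapnik--Chervonenkis, Sauer, Shelah, and Matou\v{s}ek --- so there is no in-paper argument to compare against; what you have written is a self-contained proof of the cited fact. Your route goes through Pajor's strengthening $|\mathcal{F}| \le \mathrm{sh}(\mathcal{F})$, proved by the one-point restriction/doubling induction, which is one of the two standard proofs (the other, e.g.\ in Matou\v{s}ek's lectures, inducts directly on the recurrence $\Phi_d(k) = \Phi_d(k-1) + \Phi_{d-1}(k-1)$ applied to $\mathcal{F}_{\mathrm{tr}}$ and $\mathcal{F}_{\mathrm{db}}$, using that $\mathcal{F}_{\mathrm{db}}$ has VC-dimension at most $d-1$). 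The Pajor version buys a slightly stronger statement for the same amount of work; the classical version avoids introducing the shattered-set count. All the individual steps check out: the identity $|\mathcal{F}| = |\mathcal{F}_{\mathrm{tr}}| + |\mathcal{F}_{\mathrm{db}}|$, the two shattering-preservation claims and their disjointness, and the chain
\[
\sum_{i=0}^{d}\binom{k}{i} \le \Bigl(\tfrac{k}{d}\Bigr)^{d}\Bigl(1+\tfrac{d}{k}\Bigr)^{k} \le \Bigl(\tfrac{ek}{d}\Bigr)^{d}
\]
for $k \ge d \ge 1$ (the case $d=0$ is degenerate in the statement itself, not in your argument). No gaps.
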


The \defin{dual set system} of $(X, \mathcal{S})$ is the set system $(\mathcal{S}, {\cal X}^*)$, where ${\cal X}^* := \{\mathcal{S}_x | x \in X\}$ and $\mathcal{S}_x := \{B \in \mathcal{S} | x \in B \}$. The \defin{dual shatter function} of the set system $(X, \mathcal{S})$ is the shatter function of the dual set system of $(X, \mathcal{S})$ and is represented by $\pi^*_{\mathcal{S}}(k)$. For a set system $(X,{\cal S})$, a set $B\in {\cal S}$ \defin{crosses} a pair of elements $\{x,y\}\subseteq {X}$ if and only if exactly one element in $\{x,y\}$ is contained in $B$. In the terminology of this paper, we say $B$ separates $\{x, y\}$. A \defin{matching} on the set $X$ is a disjoint collection of pairs of elements of $X$. A \defin{perfect matching} of $X$ is a matching of size $\left \lfloor \frac{|X|}{2}\right \rfloor$. We define the \defin{crossing number} of a matching $M$ on the elements in $X$ with respect to ${\cal S}$ to be the maximum number of pairs in $M$ crossed by any set $B \in {\cal S}$. The following result illustrates the relationship between the dual shatter function and the crossing number of a matching on a set system.

\begin{thm}[\cite{Chazelle1989},\cite{HAUSSLER1995217}] \label{thm:crossing}
    Let $(X,{\cal S})$ be a set system with $|X|=n$ and dual shatter function $\pi^*_{{\cal S}}(k)=O(k^d)$. Then there exists a perfect matching on elements of $X$ with crossing number $O(n^{1-1/d})$ with respect to ${\cal S}$.
\end{thm}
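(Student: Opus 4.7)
The plan is to follow the classical reweighting strategy of Chazelle and Welzl. I would construct the matching $M$ greedily over $\lfloor n/2 \rfloor$ rounds while maintaining a weight function $w : \mathcal{S} \to \mathbb{R}_{>0}$, initialized to $w \equiv 1$ (so the total weight is $W_0 = |\mathcal{S}|$). In round $i$ I work with the current ground set $X_i$ of size $n_i = n - 2(i-1)$; I pick a pair $\{x_i, y_i\} \subseteq X_i$ minimizing the weighted separation $\sum \{w(B) : B \in \mathcal{S}, \ B \text{ separates } \{x_i, y_i\}\}$, add $\{x_i, y_i\}$ to $M$, remove $x_i, y_i$ from the ground set, and then double $w(B)$ for every $B$ that separated the pair just chosen.

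The heart of the argument is a \emph{close-pair lemma}: in any positively weighted system $(X_i, \mathcal{S})$ with dual shatter function $O(k^d)$, there exist two elements of $X_i$ whose weighted separation is at most $c \cdot W_i / n_i^{1/d}$ for an absolute constant $c$. I would derive this via Haussler's packing lemma applied to the dual system: normalize $w$ to a probability measure on $\mathcal{S}$, view each $B \in \mathcal{S}$ as a $\{0,1\}$-vector indexed by $X_i$, and argue that if every pair of $X_i$-elements were separated by a large weight of sets, then too many traces would fit into a small symmetric-difference ball, violating the polynomial bound $\pi^*_{\mathcal{S}}(k) = O(k^d)$ on some carefully chosen $k$-subset of $\mathcal{S}$.

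Granted the close-pair lemma, a telescoping argument finishes the proof. By construction, doubling the weights of separating sets in round $i$ increases the total weight by at most $c \cdot W_i / n_i^{1/d}$, so $W_{i+1} \le W_i\bigl(1 + c/n_i^{1/d}\bigr)$. Taking logarithms and summing gives
\[
\ln W_{\lfloor n/2 \rfloor} \ \le\ \ln W_0 + c \sum_{i=1}^{\lfloor n/2 \rfloor} n_i^{-1/d} \ =\ \ln |\mathcal{S}| + O\!\left(n^{1-1/d}\right).
\]
Any $B \in \mathcal{S}$ that crosses $k$ pairs of the final matching has had its weight doubled $k$ times, so its final weight is at least $2^k$, which cannot exceed the final total weight. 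Therefore the crossing number of $M$ is at most $\log_2 |\mathcal{S}| + O(n^{1-1/d})$.

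Finally, I would absorb the $\log_2 |\mathcal{S}|$ term by restricting $\mathcal{S}$ to its trace on $X$ and discarding duplicates. Since a polynomial dual shatter function forces the dual VC-dimension to be finite, and finiteness of dual VC-dimension is equivalent to finiteness of primal VC-dimension, the primal shatter function $\pi_{\mathcal{S}}(n)$ is also polynomial in $n$; hence we may assume $|\mathcal{S}| \le n^{O(1)}$, whence $\log_2 |\mathcal{S}| = O(\log n) \subseteq O(n^{1-1/d})$. The principal obstacle is the close-pair lemma: that is the one place where the dual-shatter hypothesis is truly used, while the remainder of the proof is largely bookkeeping around the multiplicative weight update and the Sauer--Shelah reduction.
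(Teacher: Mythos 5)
The paper does not prove \cref{thm:crossing} at all --- it is imported as a black box from Chazelle--Welzl and from Haussler, so there is no in-paper argument to compare yours against. Your proposal reconstructs exactly the canonical proof from those sources: greedy matching with multiplicative weight updates, a close-pair (packing) lemma driven by the dual shatter function, the telescoping bound $\ln W_{\mathrm{final}} \le \ln|\mathcal{S}| + O(n^{1-1/d})$, and the observation that a set crossing $k$ matched pairs has weight at least $2^k$. The bookkeeping is right, including the reduction to $|\mathcal{S}| \le n^{O(1)}$ via bounded dual (hence primal) VC-dimension and Sauer--Shelah, and your use of Haussler's packing lemma rather than the sampling argument is what yields the log-free $O(n^{1-1/d})$ bound claimed in the statement. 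Two small caveats. First, in your sketch of the close-pair lemma the roles are transposed: for the dual packing you should view each \emph{point} $x \in X_i$ as the set $\mathcal{S}_x \subseteq \mathcal{S}$ (equivalently a vector indexed by the weighted sets), so that the weighted symmetric-difference distance between $\mathcal{S}_x$ and $\mathcal{S}_y$ is precisely the weighted separation of $\{x,y\}$; assuming all pairs are $\delta$-far then gives a $\delta$-separated packing of size $n_i$, and the packing lemma for the dual system forces $n_i = O(\delta^{-d})$, i.e.\ $\delta = O(n_i^{-1/d})$. Second, that lemma is the entire content of the theorem and you have only sketched it, as you yourself flag; everything else is standard. (Also note the bound as stated presumes $d \ge 2$; for $d=1$ both the harmonic-type sum and the $\log|\mathcal{S}|$ term exceed $O(n^{0})$, but the paper only ever applies the theorem with $d = t+1 \ge 2$.)
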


Now we are ready to state the main theorem of this section regarding a graph class with a polynomial-sized vertex-separating path system. 

\begin{thm} \label{poly-lower}
    Let $t > 0$ be an integer and $G$ be a $K_{2,t}$-minor-free graph. Let $v_0 \in V(G)$ be a vertex of degree $\Omega(n^{\delta})$. Then there exists an $\epsilon := \epsilon(t, \delta)=\frac{\delta}{t+1}$ such that for any vertex-separating path system $\mathcal{P}$ of $G$, we have $|\mathcal{P}| = \Omega(n^\epsilon)$.
\end{thm}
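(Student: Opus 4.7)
The plan is to restrict the separating system to the high-degree neighborhood $N(v_0)$, bound the dual shatter function of the induced set system using the $K_{2,t}$-minor-free hypothesis, and then apply \cref{thm:crossing} to force the conclusion via a perfect matching whose pairs must each be crossed by some path.

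First I would set up the relevant set system. Let $N := N(v_0)$, so that $|N| = \Omega(n^\delta)$ by hypothesis. For each $\Pi \in \mathcal{P}$, define $S_\Pi := \Pi \cap N$, and consider the set system $(N,\mathcal{S})$ with $\mathcal{S} := \{S_\Pi : \Pi \in \mathcal{P}\}$. Since $\mathcal{P}$ is a vertex-separating path system of $G$, for every pair of distinct $x,y \in N$ there exists $\Pi \in \mathcal{P}$ with $|\{x,y\} \cap S_\Pi|=1$, so $\mathcal{S}$ is a separating system of $N$. In particular, every lower bound on $|\mathcal{S}|$ transfers to $|\mathcal{P}|$.

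The main technical step, and the principal obstacle, is to show that the dual shatter function of $(N,\mathcal{S})$ satisfies $\pi^*_\mathcal{S}(k) = O(k^{t+1})$. A pattern of $x \in N$ on $k$ selected paths $\Pi_{i_1}, \ldots, \Pi_{i_k}$ is the index set $\{j : x \in \Pi_{i_j}\}$, and I need to show that the number of distinct patterns realised by elements of $N$ is at most $O(k^{t+1})$. The idea is that a super-polynomial number of distinct patterns would force, via Sauer-Shelah (\cref{lm:sauer}) and a pigeonhole/Ramsey-style extraction, a configuration consisting of $t$ distinct neighbors $x_1,\ldots,x_t$ of $v_0$ together with a common vertex $v \ne v_0$ and pairwise internally disjoint sub-paths of the selected paths joining each $x_j$ to $v$. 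Contracting each such sub-path into a branch set, and taking $\{v_0\}$ and $\{v\}$ as the two branch sets on the other side, exhibits a $K_{2,t}$-minor in $G$, contradicting the hypothesis. Arranging the choices of endpoints, the common vertex $v$, and the disjointness of the sub-paths simultaneously is the hard part, and it is exactly what pins down the exponent $t+1$.

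Given the shatter function bound, \cref{thm:crossing} applied to $(N,\mathcal{S})$ produces a perfect matching $M$ on $N$, of size $\lfloor |N|/2 \rfloor$, whose crossing number is $O\!\left(|N|^{1-1/(t+1)}\right)$. Since $\mathcal{S}$ separates $N$, every pair of $M$ is crossed by at least one $S_\Pi$, and summing crossings over all paths yields
\[
\left\lfloor \frac{|N|}{2} \right\rfloor \;\leq\; |\mathcal{P}| \cdot O\!\left(|N|^{1-1/(t+1)}\right).
\]
Rearranging and plugging in $|N| = \Omega(n^\delta)$ gives $|\mathcal{P}| = \Omega\!\left(|N|^{1/(t+1)}\right) = \Omega\!\left(n^{\delta/(t+1)}\right)$, which is exactly the claimed bound with $\epsilon = \delta/(t+1)$.
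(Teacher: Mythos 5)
Your overall architecture is exactly the paper's: restrict to $X:=N(v_0)$, bound the dual shatter function of the induced set system by $O(k^{t+1})$, invoke \cref{thm:crossing} to get a perfect matching of $X$ with crossing number $O(|X|^{1-1/(t+1)})$, and divide. The final counting step is correct. But the proposal has a genuine gap precisely where you flag ``the hard part'': you never actually prove the dual shatter function bound, and the route you sketch for it --- extracting, by ``pigeonhole/Ramsey-style'' arguments, $t$ internally disjoint subpaths from distinct neighbors of $v_0$ to a common vertex $v$ --- is not carried out and is not obviously completable; in particular it is unclear how many distinct patterns you would need before such a disjoint configuration is forced, and that count is the whole content of the exponent $t+1$.

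The paper avoids this entirely by bounding the \emph{dual VC-dimension} by $t+1$ and then applying \cref{lm:sauer} to get $\pi^*_{\mathcal{S}}(k)=O(k^{t+1})$. The point is that shattering hands you the disjointness for free: if $t+2$ paths $\Pi_1,\dots,\Pi_{t+2}$ are shattered in the dual system, then for each $i\le t$ there is a vertex $x_i\in X$ lying in $\Pi_{t+1}\cap\Pi_i$ and in \emph{no other} path of the family, and a vertex $y_i\in X$ lying in $\Pi_{t+2}\cap\Pi_i$ and in no other path of the family. Contracting $\Pi_{t+1}$ together with the subpaths of each $\Pi_i$ from $x_i$ up to the neighbor of $y_i$ yields one connected branch set adjacent to all the $y_i$ (and disjoint from them, by the uniqueness in the shattering condition), while $v_0$ itself is the second hub since each $y_i\in N(v_0)$; this is the $K_{2,t}$-minor. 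Note also a technical point you skip: for $\{v_0\}$ to serve as a hub disjoint from the contracted branch set, the paths must not contain $v_0$, which is why the paper first replaces $\mathcal{P}$ by the system $\mathcal{P}'$ obtained by splitting every path at $v_0$ (costing only a factor $2$). Without that preprocessing, and without an actual proof of the shatter-function bound, the argument as written does not close.
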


\begin{proof}
    Let $\mathcal{P}$ be a separating path system of $G$. Define the set $\mathcal{P}'$ using $\mathcal{P}$ as follows. For each path $\Pi \in \mathcal{P}$, if $v_0 \in V(\Pi)$ then we include in $\mathcal{P}'$ at most two subpaths of $\Pi$ created by removing $v_0$ from $\Pi$; otherwise, we include $\Pi$ in $\mathcal{P}'$. Note that $|\mathcal{P}'| \le 2 \cdot |\mathcal{P}|$.
    
    Consider the set system $(X, \mathcal{S})$ where $X := N(v_0)$ and $\mathcal{S} := \{X \cap \Pi \ |\  \Pi \in \mathcal{P}' \}$. Let $({\cal S},{\cal X}^*)$ be the dual set system to $(X, \mathcal{S})$. 
    We claim that the set system $({\cal S},{\cal X}^*)$ has a VC-dimension at most $t+1$. Assume to the contrary that $({\cal S},{\cal X}^*)$ has a VC-dimension of at least $t+2$. Let ${\cal Q} := \{\Pi_1, \Pi_2, \dots, \Pi_{t+1}, \Pi_{t+2}\}$ be a set of $t+2$ distinct paths in ${\cal S}$ which are shattered by ${\cal X}^*$.
    By definition, for each $1 \le i \le t$, there are distinct vertices $\{x_1,x_2,\ldots,x_t,y_1,y_2,\ldots t_t\}$ such that $x_i\in \Pi_{t+1}\cap \Pi_i$ and $x_i$ is not contained in any other paths in ${\cal Q}$. Similarly, $y_i\in \Pi_{t+2}\cap \Pi_i$ and $y_i$ is not contained in any other path in ${\cal Q}$. For each $y_i$, let $y'_i$ be the neighbor of $y_i$ on the path $\Pi_i$ we meet by traversing it from $x_i$ to $y_i$. Let $\Pi'_i$ be the sub-path of $\Pi_i$ from $x_i$ to $y'_i$. Let $H$ be the graph obtained by contracting $\Pi_{t+1}$ and all of the $\Pi'_i$ to a vertex. By deleting some further edges from $H$ we obtain a star with the leaves $y_1,y_2,\ldots,y_t$. Hence $G$ contains a minor of $K_{2,t}$, a contradiction.

    By \cref{lm:sauer} and \cref{thm:crossing}, there is a matching $M$ of elements in ${X}$ such that each set $B \in {\cal S}$ crosses and therefore separates $O(n^{\delta(1-1/(t+1))})$ pairs of elements in $M$. Hence to separate every pair in $X$, we need $\Omega(n^{\delta}/n^{\delta(1-1/(t+1))})=\Omega(n^{\delta/(t+1)})$ paths in ${\cal P}'$. This implies that any vertex-separating path system of $G$ must have size $\Omega(n^{\delta/(t+1)})$.
\end{proof}

\begin{rem}
    The dual of a set system with a bounded VC-dimension has a bounded VC-dimension \cite{AIF_1983__33_3_233_0}. Using this fact and \cref{lm:sauer}, for a set system $({X},{\cal S})$ of VC-dimension $d$, we have that $\pi^*_{{\cal S}}(k)=O(k^{2^{d+1}})$. Therefore, in the proof of \cref{poly-lower}, to obtain an upper bound for the VC-dimension of the dual set system $({\cal S},{\cal X}^*)$, we might only rely on the upper bound of the VC-dimension of the primal set system $(X, \mathcal{S})$. It is not hard to see that the VC-dimension of the primal set system is also at most $t+1$. However, to obtain an improved lower bound for the size of the vertex-separating path system of $G$, we directly study the VC-dimension of the dual set system.
\end{rem}

As a corollary of this result, every outerplanar graph with a high degree (polynomial-sized) vertex requires a polynomial-sized vertex-separating path system. On another note, the result of \cref{poly-lower} can be generalized to $K_{3,t}$-minor-free graphs in a natural way. In fact, for a positive integer $t$, if $G$ is a $K_{3,t}$-minor-free graph with two vertices $u_0$ and $v_0$, and the common neighborhood of $u_0$ and $v_0$ has a size of $\Omega(n^{\delta})$, then there exists an $\epsilon := \epsilon(t, \delta)$ such that any vertex-separating path system of $G$ is required to have a size $\Omega(n^\epsilon)$. This, in turn, implies that every bounded genus graph with two vertices having a polynomial-sized common neighborhood requires a vertex-separating path system of polynomial size \cite{ringela, ringelb, DBLP:journals/jct/Bouchet78}.

\section{Separating Tree Systems} \label{tree_sep}

In this section, as a generalization of separating path systems, we study separating tree systems for the vertex set of a graph $G$. Recall that $f_t(G)$ is the size of a smallest vertex separating tree system of the graph $G$, that is, the smallest number of subtrees, $T_1, T_2, T_3, \ldots, T_k$, of $G$, such that each pair of vertices is separated by one of these trees. As a comparison, notice that in contrast to separating path systems, $f_t(K_{m,n}) = O(\log(n+m))$ \cite{DBLP:journals/combinatorics/Rosendahl03}.

\begin{thm}
    Let $T$ be an $n$-vertex tree. Then $\log(n) \le f_t(T) \le n / 2 + \log(n) + O(1)$, and these bounds are tight (up to lower order terms).
\end{thm}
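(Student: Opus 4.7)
\emph{Plan.} I would handle the two bounds with disjoint arguments.

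The lower bound $f_T(T) \ge \lceil \log n \rceil$ is the information-theoretic bound recorded immediately after \cref{kn}: any separating system on an $n$-element universe has at least $\lceil \log n \rceil$ sets because $k$ sets produce only $2^k$ distinct membership patterns, and this uses only $|V(T)| = n$, so it transfers verbatim to separating tree systems. Tightness is witnessed by the star $K_{1,n-1}$: label its $n-1$ leaves by distinct positive integers in $[1,n-1]$; for each bit position $i$, the sub-star consisting of the center $c$ together with every leaf whose $i$-th label-bit is $1$ is a subtree, so $\lceil \log(n-1) \rceil$ such sub-stars distinguish every pair of leaves, and one extra trivial subtree $\{c\}$ separates the center from each leaf, yielding $f_T(K_{1,n-1}) \le \lceil \log n \rceil + 1$.

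For the upper bound $f_T(T) \le \lceil n/2 \rceil + \lceil \log n \rceil + O(1)$, my plan is a two-phase construction. Root $T$ at an arbitrary vertex and order its vertices $v_1, v_2, \ldots, v_n$ in DFS preorder. In Phase A (pair separation), pair consecutive vertices $(v_{2i-1}, v_{2i})$; for each pair the DFS subtree rooted at $v_{2i}$ (with a small adjustment when $v_{2i-1}$ is not the DFS-parent of $v_{2i}$) contains exactly one of the two, giving $\lceil n/2 \rceil$ pair-separating subtrees. In Phase B (cross-pair separation), include $\lceil \log n \rceil$ more subtrees obtained from a nice labeling in the spirit of \cref{kn} but adapted to trees: assign each vertex a $\lceil \log n \rceil$-bit label and, for each bit, include a subtree of $T$ whose vertex set is the corresponding bit-class. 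Because subtrees (unlike subpaths) are allowed to branch, one has sufficient flexibility to arrange such a labeling; together, the two phases separate every pair of distinct vertices and yield the claimed bound after absorbing an $O(1)$ term for parity and boundary cases.

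\emph{Tightness.} The path $P_n$ shows the leading $n/2$ term of the upper bound is tight: every subtree of $P_n$ is a subpath, so $f_T(P_n) = f(P_n) = \lceil n/2 \rceil$ by \cref{cycle}. Combined with the star, this matches both bounds up to lower-order terms.

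The \emph{main obstacle} is Phase B of the upper bound: the nice labeling must yield subtrees of $T$, not arbitrary subsets. On long degree-two stretches of $T$ (the extreme case being a pure path, where the logarithmic saving vanishes entirely) the bit-classes are forced to be long contiguous subpaths, which defeats the bit-labeling. My plan is to choose the labeling hierarchically---for example, from a centroid decomposition of $T$---so that each bit-class is a subtree by construction, and to rely on Phase~A to pick up any residual separations the labeling leaves uncovered.
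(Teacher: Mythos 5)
Your lower bound and both tightness witnesses (the star for the $\log n$ term, the path for the $n/2$ term) are correct and agree with the paper. The gap is in the upper bound, and it is exactly the obstacle you flagged yourself: Phase B cannot be carried out with $O(\log n)$ trees, and the proposed repair does not close it. Take $T = P_n$ with vertices $v_1,\ldots,v_n$ in path order, rooted at $v_1$, so that DFS preorder is the path order. Your Phase A trees are the nested suffixes $\{v_{2i},\ldots,v_n\}$; these separate every pair \emph{except} the $\lfloor (n-1)/2 \rfloor$ disjoint adjacent pairs $\{v_{2i},v_{2i+1}\}$. Every subtree of $P_n$ is a subpath $\{v_a,\ldots,v_b\}$, and it separates $\{v_{2i},v_{2i+1}\}$ only if $a = 2i+1$ or $b = 2i$; hence it separates at most two of these pairs, and any completion of your Phase A needs $\Omega(n)$ further trees, not $\lceil \log n \rceil$. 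No choice of labeling --- hierarchical, centroid-based, or otherwise --- evades this counting argument. The problem is architectural: your $n/2$ trees are spent on a fixed perfect matching of designated pairs, leaving $\Theta(n)$ disjoint unseparated pairs for the logarithmic phase to handle, which is impossible.

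The paper inverts the roles of the two phases. It picks a centroid $v^*$ of $T$, spends $\lceil \log n \rceil + O(1)$ trees only on the coarse task of separating vertices lying in \emph{different} components of $T - v^*$ (each such tree is $v^*$ together with a subset of the components, hence connected), and then spends the $n/2$ trees on all intra-component separation simultaneously: within each component, rooted at its neighbour of $v^*$, take the root-to-$u$ path for every vertex $u$ of the component, index these paths arbitrarily, and merge the paths carrying the same index across different components through $v^*$ into a single tree. Two vertices in one component are separated because the path to the shallower one omits the deeper one, and the centroid property caps the number of indices needed at $n/2$. The merging through $v^*$ is precisely where the freedom to use trees rather than paths is exploited. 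To salvage your write-up, replace Phases A and B with this division of labour; your lower-bound and tightness arguments can stand as written.
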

    
\begin{proof}
    Let $v^*$ be a centroid of $T$, that is, a vertex of $T$, such that its removal results in connected components, each of size at most $\frac{n}{2}$. We root each component of $T \setminus \{v^*\}$ at a neighbor of $v^*$. First, we separate the vertices of different components from each other. With an idea similar to \cref{kn}, we assign a binary string of length $\lceil \log{(deg(v^*))} \rceil \le \log{n} + 1$ to each component of $T \setminus \{v^*\}$ and for each bit position we consider a spanning tree of $v^*$ and all components whose bit is $1$ at this position.

    Within each component of $T \setminus \{v^*\}$, we consider the paths from the root of the component to each of its vertices. Note that by the property of centroid, there are at most $\frac{n}{2}$ such paths in each component. We order these paths within each component arbitrarily. Using the vertex $v^*$ as a common neighbor, we merge the paths with the same index into one tree. This set of $\frac{n}{2}$ trees will separate the vertices located in the same component. To separate $v^*$ from all the other vertices, we add one single-vertex tree, covering $v^*$. In total we have at most $\frac{n}{2} + \log{n} + O(1)$ trees.

    The $\frac{n}{2}$ term in the upper bound is the best possible, because if the tree $T$ is a path, then a vertex-separating tree system is equivalent to a vertex-separating path system and, by \cref{cycle}, we know that $f_t(P_n) = \lceil \frac{n}{2}\rceil$. By \cref{kn}, the lower bound is trivial, and observe that by the above construction, one can see that $f_t(K_{1, n-1}) = \log{n} + O(1)$.
\end{proof}

In the rest of this section, we prove an upper bound on $f_t(T)$ in terms of the number of vertices and the radius of the tree $T$. For a non-empty subset $S$ of non-negative integers, define $b(S)$ to be the bitwise OR of the elements of $S$. If $k$ is the smallest integer such that $\max(S) < 2^k$, define $c(S) := \{2^k-1-x | x \in S\}$. In other words, the elements of $c(S)$ are obtained from the elements of $S$ by flipping the first $k$ bits in their binary representation.
    
\begin{lem} \label{bit-or}
    Let $[l_1, r_1]$ and $[l_2, r_2]$ be two disjoint intervals of integers, where, $0 \le l_1 \le r_1 < l_2 \le r_2$. Then $b([l_1, r_1]) \neq b([l_2, r_2])$ or $b(c([l_1, r_1])) \neq b(c([l_2, r_2]))$.
\end{lem}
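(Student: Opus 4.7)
The plan is to argue by contradiction: suppose that $b([l_1,r_1]) = b([l_2,r_2]) = B$ and also $b(c([l_1,r_1])) = b(c([l_2,r_2])) = B'$, and aim for a contradiction with $r_1 < l_2$. Let $k_j$ denote the smallest integer with $r_j < 2^{k_j}$. If $k_1 < k_2$, then every element of $[l_1,r_1]$ is at most $r_1 < 2^{k_1}$, so $b([l_1,r_1]) < 2^{k_1} \le 2^{k_2-1} \le r_2 \le b([l_2,r_2])$, which already contradicts the assumption that these two ORs agree. By symmetry, the same disposes of $k_1 > k_2$, so from here on I may set $k_1 = k_2 =: k$, and this common $k$ also governs both applications of the operator $c$.

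The main ingredient is a closed form for $b$ on intervals. For any $[l,r] \subseteq [0, 2^k-1]$, let $p \in \{-1, 0, \ldots, k-1\}$ be the position of the highest bit at which $l$ and $r$ disagree, with the convention $p := -1$ when $l = r$, and set $T := \lfloor l/2^{p+1} \rfloor$. Every element of $[l,r]$ matches $l$ above position $p$, and for $p \ge 0$ the interval contains both $T\cdot 2^{p+1} + 2^p - 1$ and $T\cdot 2^{p+1} + 2^p$, whose OR already fills bits $0,\ldots,p$. Hence
\[
b([l,r]) \;=\; T \cdot 2^{p+1} + (2^{p+1} - 1),
\]
which specialises correctly to $b(\{l\}) = l$ at $p = -1$. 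Since $c([l,r]) = [2^k-1-r,\, 2^k-1-l]$, bitwise complementation in $k$ bits preserves the highest-disagreement position and replaces $T$ by the $(k-1-p)$-bit complement $T' := 2^{k-1-p} - 1 - T$, giving $b(c([l,r])) = T'\cdot 2^{p+1} + (2^{p+1}-1)$. Adding the two identities produces the key relation
\[
b([l,r]) + b(c([l,r])) \;=\; (T+T')\cdot 2^{p+1} + 2(2^{p+1}-1) \;=\; 2^k + 2^{p+1} - 2.
\]

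Now apply this relation to both intervals. The equation $B + B' = 2^k + 2^{p_j+1} - 2$ forces $2^{p_1+1} = 2^{p_2+1}$, hence $p_1 = p_2 =: p$; then $T_j = \lfloor B/2^{p+1}\rfloor$ depends only on $B$ and $p$, so $T_1 = T_2 =: T$. Consequently both $l_1, l_2$ lie in the lower half $[T\cdot 2^{p+1},\, T\cdot 2^{p+1} + 2^p - 1]$ of the size-$2^{p+1}$ block indexed by $T$, and both $r_1, r_2$ lie in the upper half $[T\cdot 2^{p+1} + 2^p,\, (T+1)\cdot 2^{p+1} - 1]$; the singleton case $p = -1$ collapses both intervals to the same $\{T\}$. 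In either case $r_1 \ge T\cdot 2^{p+1} + 2^p > T\cdot 2^{p+1} + 2^p - 1 \ge l_2$, contradicting $r_1 < l_2$. The main subtlety I anticipate is bookkeeping around the singleton convention $p = -1$ and the fact that $c$ depends on $\max(S)$ through $k$, which is why reconciling $k_1 = k_2$ at the outset is crucial; once that is done, the additive identity above pins down $(p, T)$ from $(B, B', k)$ alone and the geometric contradiction is automatic.
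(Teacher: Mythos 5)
Your proof is correct, and it takes a genuinely different route from the paper's. The paper argues by induction on the number $k$ of bits needed for $r_2$, splitting into three cases according to where the threshold $2^{k-1}$ sits relative to $[l_1,r_1]$ and stripping the top bit in the recursive case. You instead give a direct, non-inductive argument: after disposing of $k_1\neq k_2$ (which is exactly the right place to absorb the subtlety that $c$ is defined relative to $\max(S)$), you derive the closed form $b([l,r]) = T\cdot 2^{p+1}+(2^{p+1}-1)$, where $p$ is the highest bit position at which $l$ and $r$ differ and $T$ is their common prefix, together with the complementation identity $b([l,r])+b(c([l,r])) = 2^k+2^{p+1}-2$. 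All steps check out, including the $p=-1$ singleton convention and the fact that $k$-bit complementation preserves $p$ while complementing $T$ (the final displayed chain of inequalities only applies literally when $p\ge 0$, but you handle $p=-1$ separately, so this is cosmetic). What your approach buys is an exact characterization rather than a mere inequality: the pair $\bigl(b(I), b(c(I))\bigr)$ determines $(p,T)$, hence the dyadic block $[T2^{p+1},(T+1)2^{p+1})$ that $I$ straddles, and forces $I$ to contain both $T2^{p+1}+2^p-1$ and $T2^{p+1}+2^p$ (or to equal the singleton $\{T\}$). So two intervals agreeing under both $b$ and $b\circ c$ must actually intersect, which is slightly stronger than the disjointness contradiction the lemma requires. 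The paper's induction is shorter to write but conveys less structural information about why the two fingerprints suffice.
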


\begin{proof}
    Let $k$ be the smallest integer such that $r_2 < 2^k$. We prove the lemma by induction on $k$. If $k = 1$, we have $l_1 = r_1 = 0$ and $l_2 = r_2 = 1$, therefore, $b([l_1, r_1]) \neq b([l_2, r_2])$. Let $k \ge 2$ and assume that the statement is true for all values less than $k$. In order to prove the statement for $k$, we consider three cases:
    \begin{itemize}
        \item $r_1 < 2^{k-1}$: Then the $k$-th bit in $b([l_1, r_1])$ is $0$. While, by definition, the $k$-th bit in $b([l_2, r_2])$ is $1$, we have $b([l_1, r_1]) \neq b([l_2, r_2])$.
        \item $r_1 \ge 2^{k-1}$ and $l_1 < 2^{k-1}$: Since $l_2 \ge 2^{k-1}$, the $k$-th bit in $b(c([l_2, r_2])$ is $0$. Since $l_1 < 2^{k-1}$, the $k$-th bit in $b(c([l_1, r_1])$ is $1$. Thus, $b(c([l_1, r_1])) \neq b(c([l_2, r_2]))$.
        \item $l_1 \ge 2^{k-1}$: In this case, the $k$-th bit is $1$ in the binary representation of every integer in $[l_1, r_1] \cup [l_2, r_2]$. In consequence, $b([l_i, r_i]) = b([l_i-2^{k-1}, r_i-2^{k-1}]) + 2^{k-1}$ and $b(c([l_i, r_i])) = b(c([l_i-2^{k-1}, r_i-2^{k-1}]))$ for $i \in \{1, 2\}$. Therefore, we only need to compare the values for the intervals $[l_1-2^{k-1}, r_1-2^{k-1}]$ and $[l_2-2^{k-1}, r_2-2^{k-1}]$, which implies the statement by the induction hypothesis.
    \end{itemize}
\end{proof}

\begin{thm}
    Let $T$ be an $n$-vertex tree with radius $r$. Then $max{(r, \log{n})} \le f_t(T) \le r + 2 \lceil \log{n} \rceil + 1$.
\end{thm}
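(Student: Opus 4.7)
The plan is to prove the lower and upper bounds separately. For the lower bound \(f_T(T) \ge \max(r, \log n)\), the estimate \(f_T(T) \ge \lceil \log n \rceil\) is the standard separation lower bound for an \(n\)-element ground set. For \(f_T(T) \ge r\), I would take a longest path \(P\) in \(T\), containing \(\operatorname{diam}(T)+1\) vertices. Because \(T\) is a tree, any subtree \(S\) intersects \(V(P)\) in a subpath: two vertices in \(V(S) \cap V(P)\) are joined by the unique \(T\)-path between them, which lies in both \(S\) and \(P\). Restricting any vertex-separating tree system to \(V(P)\) therefore yields a vertex-separating path system of \(P\), whose size by \cref{cycle} is at least \(\lceil (\operatorname{diam}(T)+1)/2 \rceil\). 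Combined with the classical identity \(r = \lceil \operatorname{diam}(T)/2 \rceil\) for trees (the center is the midpoint of a diametral path), this yields \(f_T(T) \ge r\).

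For the upper bound \(f_T(T) \le r + 2\lceil \log n \rceil + 1\), I would root \(T\) at a center \(c\) and label its vertices by DFS pre-order as \(l(v) \in \{0, 1, \ldots, n-1\}\), so that the subtree rooted at \(v\) corresponds to a contiguous interval \([l(v), r(v)]\) of labels. The separating system \(\mathcal{T}\) comprises two families. First, the BFS balls \(B_\ell := \{v : d(c,v) \le \ell\}\) for \(\ell = 0, 1, \ldots, r\) (each connected in \(T\), hence a subtree) give \(r+1\) subtrees that separate every pair of vertices at different depths from \(c\). Second, with \(k := \lceil \log n \rceil\), for each bit \(i \in \{0, 1, \ldots, k-1\}\) I would include the two vertex sets
\[ T_i^+ := \{v : \text{bit } i \text{ of } b([l(v), r(v)]) \text{ is } 1\} \text{ and } T_i^- := \{v : \text{bit } i \text{ of } b(c([l(v), r(v)])) \text{ is } 1\}, \]
contributing \(2\lceil \log n \rceil\) more subtrees, so \(|\mathcal{T}| = r + 2\lceil \log n \rceil + 1\).

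The key structural claim is that each \(T_i^\pm\) really is a subtree of \(T\). The root \(c\) belongs to both, since the interval \([0, n-1]\) contains a label whose bit \(i\) is \(1\) and the label \(0\) whose bit \(i\) is \(0\). Moreover, if \(v \in T_i^+\), then the parent of \(v\) has a subtree interval strictly containing \([l(v), r(v)]\), so its bitwise OR dominates \(b([l(v), r(v)])\) bit by bit; hence \(T_i^+\) is upward closed and, containing the root, is connected. The analogous argument handles \(T_i^-\).

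It remains to verify separation. Any two vertices at different depths from \(c\) are separated by some \(B_\ell\). For two distinct vertices \(u, v\) at the same depth, neither is an ancestor of the other, so their subtree intervals \([l(u), r(u)]\) and \([l(v), r(v)]\) are disjoint. \cref{bit-or} then guarantees either a bit \(i\) on which \(b([l(u), r(u)])\) and \(b([l(v), r(v)])\) differ---in which case \(T_i^+\) separates them---or a bit \(i\) on which the \(b \circ c\) values differ, yielding separation by \(T_i^-\). The main obstacle I expect is recognising that \cref{bit-or} is precisely the combinatorial ingredient suited to disjoint DFS intervals, together with carefully verifying the upward-closure argument that certifies connectivity of \(T_i^\pm\); once these are in place, the bookkeeping is routine.
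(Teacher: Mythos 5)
Your proof is correct and follows essentially the same route as the paper's: the lower bound restricts a tree system to a diametral path and invokes \cref{cycle}, and the upper bound combines the $r+1$ distance balls around a center with $2\lceil\log n\rceil$ subtrees whose separation power comes from \cref{bit-or}. The only cosmetic difference is that you label all vertices in DFS pre-order (so incomparable same-depth vertices get disjoint intervals directly), whereas the paper labels only the leaves in post-order and applies \cref{bit-or} to the sets of leaf labels below $u$ and $v$; the decomposition, the key lemma, and the case analysis are otherwise identical.
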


\begin{proof}
We start with proving the lower bound. Let $\mathcal{T}$ be a vertex-separating tree system of $T$. Let $D$ be a longest path in $T$; note that $2r-1 \le |V(D)|$. The intersection of each element of $\mathcal{T}$ with $D$ is a path. We build a vertex-separating path system of $D$ using $\mathcal{T}' := \{D\cap t | t \in \mathcal{T} \}$. Now, using \cref{cycle}, we conclude that $r = \lceil\frac{2r-1}{2}\rceil \le \lceil\frac{|D|}{2} \rceil \le |\mathcal{T}|$. On the other hand, by \cref{kn}, we know that $\lceil \log{n} \rceil \le |\mathcal{T}|$. Therefore, $max{(r, \log{n})} \le f_t(T)$.

To prove the upper bound, let $c$ be a center of the tree $T$. We root the tree $T$ at $c$. Note that the height of $T$ is equal to $r$. We denote the set of leaves of the tree $T$ by $\ell(T)$. (If $c$ has degree one, then we do not consider $c$ to be a leaf.) We will construct two sets of trees.

The first set consists of $r+1$ trees. For $i=0,\ldots,r$, the tree $T_{i}$ is the subtree of $T$ consisting of all the vertices at distance at most $i$ from the root.

The second set consists of $2 \lceil \log | \ell(T) | \rceil$ trees. In order to define these trees, we label the leaves of $\ell(T)$ according to a post-order traversal of $T$ with the integers $0,1,\ldots, |\ell(T)|-1$.  
For each $1 \leq i \leq \lceil \log | \ell(T) | \rceil$: 
\begin{enumerate} 
\item Let $\Delta_{0,i}$ be the set of leaves in $\ell(T)$ whose labels
have a zero in the $i$-th position of their binary representations. 
\item Let $\Gamma_{0,i}$ be the smallest subtree of $T$ that is rooted 
at the root of $T$ and for which $\ell(\Gamma_{0,i}) = \Delta_{0,i}$.  
\item Let $\Delta_{1,i}$ be the set of leaves in $\ell(T)$ whose labels
have a one in the $i$-th position of their binary representations. 
\item Let $\Gamma_{1,i}$ be the smallest subtree of $T$ that is rooted 
at the root of $T$ and for which $\ell(\Gamma_{1,i}) = \Delta_{1,i}$.  
\end{enumerate} 

We will show that the trees $T_i$, $0 \leq i \leq r$, and $\Gamma_{j,i}$, 
$j \in \{0,1\}$ and $1 \leq i \leq \lceil \log | \ell(T) | \rceil$, 
form a separating tree system of the tree $T$. Let $u$ and $v$ be two distinct vertices of $T$. We number the levels in $T$ from $0$ to $r$, where the root is at level zero. Let $i$ be the level of $u$ and $j$ be the level of $v$ in $T$. We consider two cases: 

\begin{itemize}
    \item $i \neq j$. We may assume that $i < j$. Then $u$ is in $T_i$ whereas $v$ is not in $T_i$.
    \item $i = j$. Let $S_u$ be the set of labels of the leaves in the subtree of $u$, and similarly define $S_v$ for the vertex $v$. Note that $S_u$ and $S_v$ are two disjoint intervals of consecutive integers. By \cref{bit-or}, $b(S_u) \neq b(S_v)$ or $b(c(S_u)) \neq b(c(S_v))$. For any of these two inequalities, one of the bit positions in which they differ introduces the tree that separates $u$ from $v$.
\end{itemize}

The total number of elements used in the two sets of trees is at most $r + 2 \lceil \log{|\ell(T)|} \rceil + 1$, which is at most $r + 2 \lceil \log{n} \rceil + 1$.
\end{proof}

This result extends to all connected graphs. For a graph $G$ of radius $r$, we can consider a BFS tree rooted at a center of $G$ and use this spanning tree to find a separating system for $G$.

\section{Open Problems} \label{final}

The following computational complexity question is still open: Is there a polynomial time algorithm to compute $f(T)$, when $T$ is a tree? 
However, we conjecture that determining the exact value of $f(G)$ for an arbitrary graph $G$ is NP-complete. Additionally, we conjecture the following:

\begin{conj}
    The problem of determining if $f(G) = \lceil \log{|V(G)|} \rceil$ for any given graph $G$ is NP-complete.
\end{conj}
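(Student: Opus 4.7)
The plan is to verify the two standard requirements for NP-completeness, using the nice-labeling characterization of \cref{kn} as the main tool.

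First, membership in NP is essentially immediate. A ``yes'' certificate for the input $G$ on $n$ vertices is a vertex-separating path system $\mathcal{S}=\{\Pi_1,\ldots,\Pi_{\lceil\log n\rceil}\}$ itself, whose total description has size $O(n\log n)$. Verification consists of (i) checking that each $\Pi_i$ is a valid path of $G$, and (ii) checking that every pair of distinct vertices of $G$ is separated by some $\Pi_i$; both tasks run in polynomial time in $n$.

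For NP-hardness, the natural candidate is a reduction from \textsc{Hamiltonian Path} (or a restricted variant such as \textsc{Hamiltonian Path in Cubic Planar Graphs}). By \cref{kn}, deciding $f(G)=\lceil\log|V(G)|\rceil$ is equivalent to deciding whether $G$ admits a nice labeling. Given a \textsc{Hamiltonian Path} instance $H$, the rough strategy is to construct, in polynomial time, a graph $G=G(H)$ obtained from $H$ by attaching a carefully designed padding gadget, so that $|V(G)|$ is (or just exceeds) a power of two and so that in every nice labeling of $G$ all but one bit position is essentially forced by the gadget. The one remaining bit would have its $1$-set equal, up to a fixed subset of gadget vertices, to $V(H)$, and the corresponding spanning path of $G$ would then restrict to a Hamiltonian path of $H$; conversely, a Hamiltonian path of $H$ would be directly extendable to a nice labeling of $G$ using prescribed labels on the gadget.

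The main obstacle is the inherent flexibility of the nice-labeling problem: both the labeling $\psi$ and the $\lceil\log n\rceil$ spanning paths may be chosen freely, which makes it hard to pin any single bit down to the Hamiltonicity question. In particular, one must rule out alternative nice labelings that spread $V(H)$ across several bit positions and exploit gadget vertices to ``glue'' the pieces into valid spanning paths, thereby bypassing the need for a Hamiltonian path in $H$ itself. The gadget must therefore (a) absorb enough padding to reach a power of two while simultaneously constraining most bits, and (b) block shortcut paths through gadget vertices between pairs of vertices of $V(H)$, so that any path spanning a large subset of $V(H)$ is essentially forced to live inside $H$. If a direct reduction from \textsc{Hamiltonian Path} proves too rigid, two natural alternatives are a Karp-style reduction from 3-SAT using clause and variable gadgets that enforce consistency across all $\lceil\log n\rceil$ bit positions, and a reduction from a covering-type problem such as \textsc{Exact Cover by 3-Sets}, which matches the combinatorial flavour of the nice-labeling condition.
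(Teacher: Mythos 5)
The statement you are trying to prove is presented in the paper only as a conjecture in the Open Problems section; the paper contains no proof of it, so there is no argument of the authors to compare yours against. Your proposal does not settle the conjecture either. The NP-membership half is correct and routine: since $f(G) \ge \lceil \log{|V(G)|} \rceil$ holds for every graph, a ``yes'' instance is witnessed by a separating path system of exactly $\lceil \log{n} \rceil$ paths, which has description size $O(n \log{n})$ and can be verified in polynomial time by checking that each element is a path and that every pair of vertices is separated. That part stands on its own.

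The hardness half, however, is a research plan rather than a proof, and the gap is exactly where the entire difficulty of the conjecture lives. An NP-hardness argument is its gadget, and you have not constructed one: you describe what a padding gadget attached to a \textsc{Hamiltonian Path} instance $H$ would need to accomplish (pad $|V(G)|$ to a power of two, force all but one bit position of every nice labeling, and block shortcut paths through gadget vertices so that a path spanning a large subset of $V(H)$ must stay inside $H$), and then you candidly name the reason such a gadget is hard to build --- in the nice-labeling characterization of \cref{kn}, the labeling $\psi$ and all $\lceil \log{n} \rceil$ spanning paths are chosen simultaneously and freely, so nothing a priori ties the Hamiltonicity of $H$ to any single bit, and alternative labelings could scatter $V(H)$ across several bit positions. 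Until a concrete construction is exhibited together with both directions of the equivalence (a Hamiltonian path of $H$ yields a nice labeling of $G(H)$, and every nice labeling of $G(H)$ yields a Hamiltonian path of $H$), no reduction exists; naming fallback source problems such as 3-SAT or \textsc{Exact Cover by 3-Sets} does not supply one. The statement therefore remains, as in the paper, an open conjecture.
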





After studying the grid graphs in \cref{grid_sec}, \cref{grid} implies the following simple corollary.

\begin{cor} \label{prod}
Let $G_1$ and $G_2$ be two graphs. If $G_1$ and $G_2$ contain Hamiltonian paths, then $f(G_1 \square G_2) \le O(\log{|V(G_1)|} + \log{|V(G_2)|})$
\footnote{
For two graphs $G_1$ and $G_2$, the \defin{Cartesian graph product} of $G_1$ and $G_2$, denoted $G_1\square G_2$, is a graph whose vertex set is $V(G_1\square G_2):= V(G_1)\times V(G_2)$ and that contains an edge between distinct vertices $v=(v_1,v_2)$ and $w=(w_1,w_2)$ if and only if
\begin{inparaenum}[(i)]
    \item $v_1=w_1$ and $v_2w_2\in E(G_2)$; or
    \item $v_2=w_2$ and $v_1w_1\in E(G_1)$.
\end{inparaenum}
}.\end{cor}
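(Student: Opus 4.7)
The plan is to reduce the problem to Theorem \ref{grid} by exhibiting the grid $G_{m,n}$ with $m := |V(G_1)|$ and $n := |V(G_2)|$ as a spanning subgraph of the Cartesian product $G_1 \square G_2$. Concretely, I would fix a Hamiltonian path $v_1, v_2, \ldots, v_m$ in $G_1$ and a Hamiltonian path $u_1, u_2, \ldots, u_n$ in $G_2$, and identify the vertex $(v_i, u_j)$ of $G_1 \square G_2$ with the vertex $(i-1, j-1)$ of $G_{m,n}$. Under this bijection, the two families of unit-step edges of $G_{m,n}$ correspond, respectively, to edges of $G_1 \square G_2$ of the form $\{(v_i,u_j),(v_{i+1},u_j)\}$ (which exist because $v_i v_{i+1}\in E(G_1)$) and $\{(v_i,u_j),(v_i,u_{j+1})\}$ (which exist because $u_j u_{j+1}\in E(G_2)$). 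Hence $G_{m,n}$ is a spanning subgraph of $G_1 \square G_2$ on the identified vertex set.

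Next, I would invoke Theorem \ref{grid} to obtain a vertex-separating path system of $G_{m,n}$ of size at most $2\lceil \log m \rceil + 2\lceil \log n \rceil$. Because any path in a subgraph $H \subseteq G$ is also a path in $G$ (using the same vertex sequence and edges), the paths produced by Theorem \ref{grid} remain paths in $G_1 \square G_2$, and separate exactly the same pairs of vertices, which by the bijection above are all pairs of vertices of $G_1 \square G_2$. This yields
\[
f(G_1 \square G_2) \;\le\; 2\lceil \log m \rceil + 2\lceil \log n \rceil \;=\; O(\log|V(G_1)| + \log|V(G_2)|),
\]
as required.

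There is essentially no real obstacle here beyond setting up the identification carefully; the only subtlety is to make sure that one uses the \emph{vertex-set} identification rather than just a subgraph relation, so that "separates in $G_{m,n}$'' translates literally into "separates in $G_1 \square G_2$.'' Since the bijection on vertex sets is the identity after identification, this subtlety is immediate and the corollary follows directly.
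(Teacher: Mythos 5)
Your proposal is correct and is exactly the argument the paper intends: the two Hamiltonian paths identify $G_{m,n}$ (with $m=|V(G_1)|$, $n=|V(G_2)|$) as a spanning subgraph of $G_1\square G_2$, and the separating path system from \cref{grid} transfers verbatim. The paper gives no further proof beyond citing \cref{grid}, so there is nothing to compare beyond noting that your write-up supplies the routine details.
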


In order to generalize this corollary, a natural question arises regarding the relationship of $f(G_1)$ and $f(G_2)$ for graphs $G_1$ and $G_2$, respectively, with their products (Cartesian product, strong product\footnote{For two graphs $G_1$ and $G_2$, the \defin{strong graph product} of $G_1$ and $G_2$, denoted $G_1\boxtimes G_2$, is a graph whose vertex set is $V(G_1\boxtimes G_2):= V(G_1)\times V(G_2)$ and that contains an edge between distinct vertices $v=(v_1,v_2)$ and $w=(w_1,w_2)$ if and only if
\begin{inparaenum}[(i)]
    \item $v_1=w_1$ and $v_2w_2\in E(G_2)$; or
    \item $v_2=w_2$ and $v_1w_1\in E(G_1)$; or
    \item $v_1w_1\in E(G_1)$ and $v_2w_2\in E(G_2)$.
\end{inparaenum}
}
, etc.). For instance, in line with \cref{prod} we can ask the same question for general graphs $G_1$ and $G_2$.

\begin{prb}
    Let $G_1$ and $G_2$ be connected graphs. What can we say about $f(G_1 \square G_2)$ as a function of $f(G_1)$ and $f(G_2)$?
\end{prb}

A question that could establish a connection between vertex-separating path systems and edge-separating path systems is as follows.

\begin{prb}
   Is there a relation between $f(G)$ and $f(\mathcal{L}(G))$, where $\mathcal{L}(G)$ is the line-graph of $G$?
\end{prb}

After exploring the maximal outerplanar graph in \cref{outer}, the next family of graphs to examine is maximal planar graphs. It is worth noting that some maximal planar graphs have a vertex-separating path system with linear size. However, determining the precise upper bound for this class of graphs is the next question we would like to ask.

\begin{prb}
    What is the value of $f(\Delta_n)$, where $\Delta_n$ is an $n$-vertex triangulation?
\end{prb}

As a final note, it would be worth exploring a tight upper bound for the size of vertex-separating tree systems on different graph classes, such as maximal outerplanar graphs, and triangulations.

\section*{Acknowledgement}

This research was initiated and conducted at the Tenth Annual Workshop on Geometry and Graphs, held at the Bellairs Research Institute in Barbados, February 3 – February 7, 2023. The authors are grateful to the organizers and to the participants of this workshop. We would like to thank the anonymous reviewers for careful reading of our manuscript. 

\nocite{*}
\bibliographystyle{abbrvnat}
\bibliography{main}
\label{sec:biblio}

\end{document}